\newcommand{\st}{s\textsf{-}t}
\newcommand{\gv}[1]{\textcolor{magenta}{#1}}
\newtheorem{lemma}{Lemma}
\newcommand{\pr}[1]{#1^{\prime}}
\newcommand{\eg}{\textit{e.g.}}
\newcommand{\ie}{\textit{i.e.}}
\newcommand{\etal}{\textit{et al.}\xspace}
\newcommand{\etc}{\textit{etc.}}
\def\BibTeX{{\rm B\kern-.05em{\sc i\kern-.025em b}\kern-.08em
    T\kern-.1667em\lower.7ex\hbox{E}\kern-.125emX}}
\begin{document}
\doi{}

\title{On the Bipartite Entanglement Capacity of Quantum Networks}
\author{\uppercase{Gayane Vardoyan}\authorrefmark{1,3}, 
\uppercase{Emily van Milligen}\authorrefmark{2},
\uppercase{Saikat Guha}\authorrefmark{2},
\uppercase{Stephanie Wehner},\authorrefmark{1}
and \uppercase{Don Towsley}\authorrefmark{3}}
\address[1]{Delft University of Technology (emails: g.s.vardoyan@tudelft.nl, s.d.c.wehner@tudelft.nl)}
\address[2]{The University of Arizona (emails: evanmilligen@arizona.edu, saikat@arizona.edu)}
\address[3]{University of Massachusetts, Amherst (emails: gvardoyan@cs.umass.edu, towsley@cs.umass.edu)}



\begin{abstract}
We consider the problem of multi-path entanglement distribution to a pair of nodes in a quantum network consisting of devices with non-deterministic entanglement swapping capabilities. 
Multi-path entanglement distribution enables a network to establish end-to-end entangled links across any number of available paths with pre-established elementary (link-level) entanglement.
Probabilistic entanglement swapping, on the other hand, limits the amount of entanglement that is shared between the nodes; this is especially the case when, due to architectural and other practical constraints, swaps must be performed in temporal proximity to each other. Limiting our focus to the case where only bipartite entangled states are generated across the network, we cast the problem as an instance of generalized flow maximization  between two quantum end nodes wishing to communicate.
We propose a mixed-integer quadratically constrained program (MIQCP) to solve this flow problem for networks with arbitrary topology. We then compute the overall network capacity, defined as the maximum number of EPR states distributed to users per time unit, by solving the flow problem for all possible network states generated by probabilistic entangled link presence and absence, and subsequently by averaging over all network state capacities. The MIQCP can also be applied to networks with multiplexed links -- \ie, links capable of generating multiple EPR pairs per time unit -- by converting the original network state's graph into an equivalent one with unit link capacities. While our approach for computing the overall network capacity has the undesirable property that the total number of states grows exponentially with link multiplexing capability, it nevertheless yields an exact solution that serves as an upper bound comparison basis for the throughput performance of more easily-implementable yet non-optimal entanglement routing algorithms. We apply our capacity computation method to several multiplexed and non-multiplexed networks, including a topology based on SURFnet -- a backbone network used for research purposes in the Netherlands -- for which real fiber length data is available.
\end{abstract}

\begin{keywords}
entanglement distribution and routing, mixed-integer quadratically constrained program, quantum network
\end{keywords}

\titlepgskip=-15pt

\maketitle

\section{Introduction}
\label{sec:introduction}
Quantum networks consisting of first- and second-generation quantum repeaters \cite{munro2015inside,muralidharan2016optimal} rely on entanglement distribution to enable distributed quantum applications between distant nodes. 
In the most basic scenario, bipartite entangled states are first established at the physical (elementary) link level between neighboring nodes, and later ``connected'' via entangling measurements. The latter, a process also known as entanglement swapping, serves to extend entanglement across a longer distance, but for certain quantum architectures succeeds only probabilistically. Figure~\ref{fig:swap_success} illustrates a successful execution of this process using the state $\ket{\Psi^+}=(\ket{00}+\ket{11})/\sqrt{2}$ as an example. In large, complex quantum networks with arbitrary topology, entanglement routing algorithms are necessary to determine the precise subset of all available swapping operations that must be performed in order to link two nodes that wish to communicate.
Efficient use of entanglement -- a resource that is created probabilistically and whose usefulness carries an expiration date -- is thus an advantageous property for such algorithms. The number of choices that confront an entanglement routing algorithm further increases when there is an option of utilizing multiple paths to distribute as many bipartite entangled states to a pair of users as possible, and when entanglement swapping success probability varies from node to node.
\begin{figure}
    \centering
    \includegraphics[width=0.98\linewidth,trim={3cm 4cm 3cm 3cm},clip]{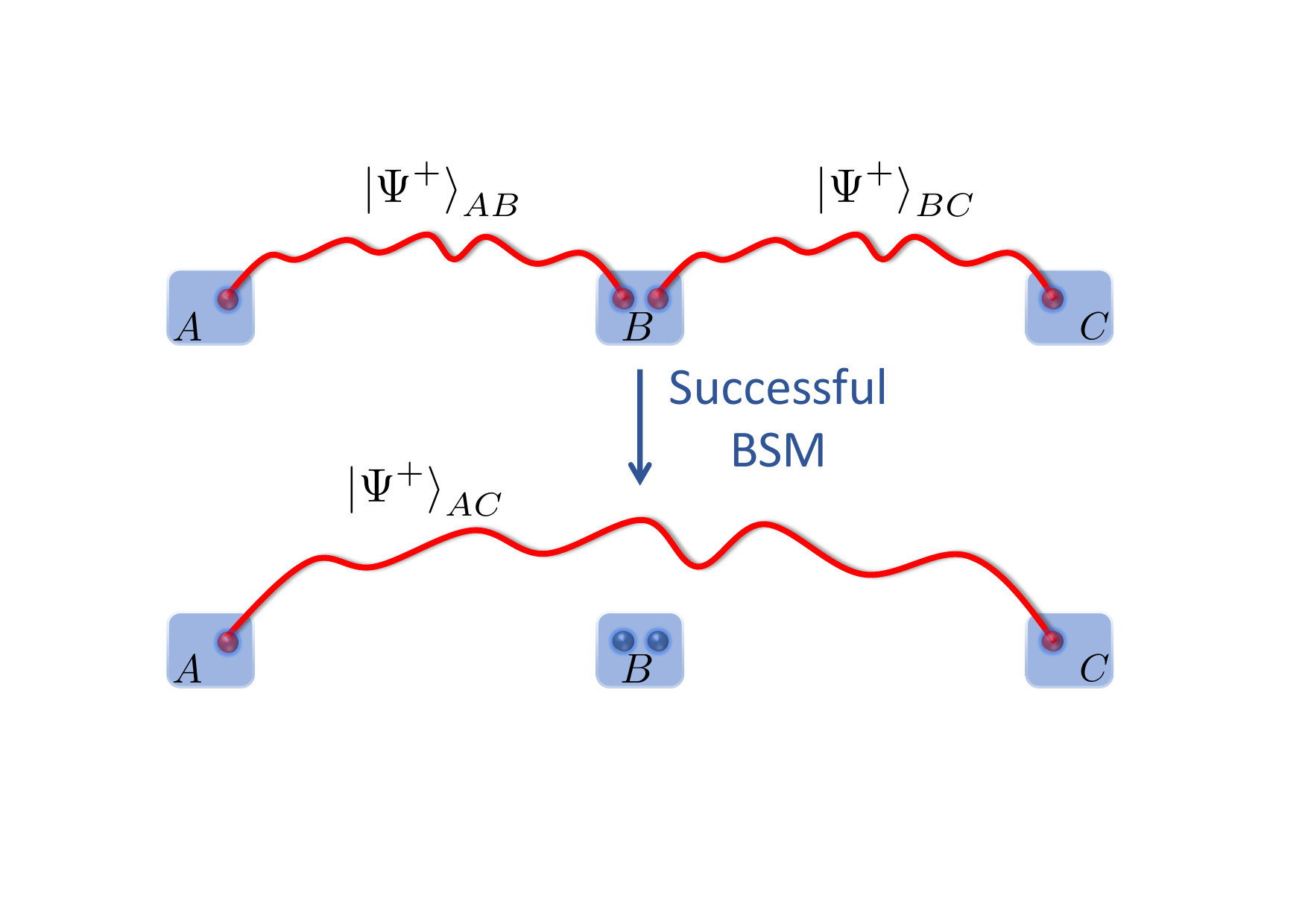}
    \caption{Entanglement swapping. Quantum network nodes $A$, $B$, and $C$ begin by generating elementary link-level entangled states $\ket{\Psi^+}_{AB}$ and $\ket{\Psi^+}_{BC}$. A successful swap/BSM (Bell state measurement) at node $B$ results in an entangled state $\ket{\Psi^+}_{AC}$ shared between $A$ and $C$.}
    \label{fig:swap_success}
\end{figure}

A metric of interest for the multi-path entanglement distribution problem is the overall \textit{bipartite entanglement capacity (BEC)} of a given network, defined here as the average number of bipartite entangled quantum states that the network can provide to a specific pair of users or nodes, $s$ and $t$ (a mathematical definition is provided later in this section). This quantity is especially relevant to applications which have either stringent rate requirements or a necessity to generate multiple contemporaneous EPR (Einstein–Podolsky–Rosen) pairs. An example of the former is a scenario wherein a pair of users carrying out a quantum key distribution (QKD) \cite{bennet1984quantum,ekert1991quantum,bennett1992quantum} protocol wish to generate a secret key by a given deadline. An example of the latter is a Blind Quantum Computation (BQC) protocol \cite{broadbent2009universal,leichtle2021verifying} wherein typically, a user and server must establish several EPR pairs close together in time, in order to be able to carry out each round of the protocol at all. The ability to utilize multiple paths along which to generate entanglement may thus not only boost an application's performance but also make it feasible for implementation on a network. Computation of the BEC provides insight into a network's ability to support such quantum applications. We note that in this work, the BEC is always defined with respect to a node pair $(s,t)$; and further, that ``capacity'' here incorporates only the rate at which end-to-end entanglement is generated  (in contrast to other definitions of capacity commonly found in quantum literature).

Our goal in this paper is to compute the BEC for an arbitrary network serving a single user pair.
We approach this problem by computing
the capacities of all possible network states that result from probabilistic elementary link\footnote{An elementary link is one that exists between two neighboring network nodes, \ie, nodes which have a direct physical connection (\eg, optical fiber).} entanglement generation, and subsequently take the expectation of these values. In this paper, ``network state'' is defined in terms of the presence and location of entangled links; this is not to be confused with \textit{quantum} states.
We formulate the (sub)problem of computing the capacity of each network state as
an $\st$ flow optimization problem, where the flow through the network may not split or merge, and the network nodes ``leak'' a certain amount of flow corresponding to their entanglement swapping failure probabilities. We cast this as a mixed-integer quadratically constrained program (MIQCP), which can be directly applied to non-multiplexed networks, \ie, those in which each link is able to generate at most one entangled state per allotted time interval. With a straightforward modification to the graph corresponding to a given network state, the MIQCP can also be applied to a multiplexed network (one in which each elementary link may generate multiple entangled states per time interval). A favourable consequence of the MIQCP approach is that its solution for a given network state yields not only the capacity of the state, but also a way to obtain it: each variable's value in the solution specifies the entanglement routing procedure, \ie, the optimal set of swaps to be performed.

A quantum network may be represented as an undirected graph $\mathcal{G}=(\mathcal{V},\mathcal{E})$ with $\mathcal{V}$ comprising a set of nodes (\eg, quantum-equipped users, quantum repeaters, switches, \etc) interconnected via physical links that make up the edge set $\mathcal{E}$. For a given point in time, the \textit{network state} is also represented by a graph, $\mathcal{G}^{\prime}=(\mathcal{V},\mathcal{E}^{\prime})$, where $\mathcal{E}^{\prime}\subseteq \mathcal{E}$ indicates successfully generated link-level entanglement. A model commonly used in literature is a network that operates in a synchronized, time-slotted manner -- \ie, time is discretized into equal-length slots wherein network state may change according to new events or actions. For the type of network we wish to analyze, we are motivated to adopt such a model as well: specifically, we assume that at the beginning of each time slot, all links attempt entanglement generation, and that
network state is reset to the empty state (one in which no entangled links are present) at the end of each time slot. Time slot length is assumed to be long enough for all links to be able to perform the necessary number of attempts. For instance, in the absence of multiplexing or with, \eg, spatial multiplexing, time slot length need only be as long as the propagation delay of the longest link in the network, $\tau_{\max}$. Certain temporal multiplexing proposals, on the other hand, might require a time slot length that is a multiple of $\tau_{\max}$. 
Entanglement swapping, namely Bell state measurements (BSMs), is performed at the end of each time slot, before elementary link-level entanglement expires.  This model is descriptive of a quantum network with highly limited quantum memory coherence times -- for an example, networks with atomic ensemble memories \cite{duan2001long} necessitate an almost immediate use of entanglement once it has been generated \cite{maier2020investigating,rabbie2020simulation}. Finally, we assume in our model that each BSM succeeds with an \textit{entanglement swapping success probability} $q\in(0,1]$.

Figure \ref{fig:netEx} illustrates an example quantum network with two nodes, $s$ and $t$, wishing to share as many entangled states as possible. In a heterogeneous network such as this one, each link $(i,j)$ has its own entanglement generation success probability $p_{ij}$, and each network node $n_i$ has its own entanglement swapping success probability $q_i$. Even if the network in Figure \ref{fig:netEx} were non-multiplexed, so that each pair of neighboring nodes shares at most one bipartite entangled state at any time, we may nevertheless already encounter a non-trivial decision-making scenario. Consider, for example, a time instance when all entangled links are present: the network then has the option of connecting $s$ and $t$ via $(a)$ two paths, $s-s_1-s_2-t$ and $s-s_3-s_4-t$, or via $(b)$ a single path, $s-s_3-s_2-t$. Even though option $(a)$ may result in two entangled states while option $(b)$ produces at most one, the optimal choice depends on the swapping success probabilities of the nodes: if, for instance, $q_1$ and $q_4$ are extremely low compared to $q_2$ and $q_3$, then one may opt for $(b)$ to avoid actions that result, with high probability, in no $\st$ entanglement at all.

\begin{figure}[t]
\centering
\includegraphics[]{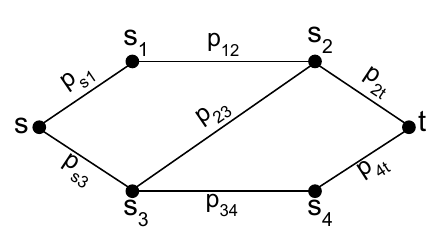}
\caption{Example of a quantum network where two nodes, $s$ and $t$, wish to share entanglement. Link $(i,j)$ generates entanglement with probability $p_{ij}$, and node $n_i$ performs entanglement swapping with success probability $q_i$.}
\label{fig:netEx}
\end{figure}

More concretely, the bipartite entanglement capacity of the above-mentioned network state is given by 
\begin{align*}\max\{q_1q_2+q_3q_4, q_2q_3\},\end{align*}
where the capacity of each path is defined as the product of entanglement swapping probabilities of the nodes on that path. The capacity of a network state $S$ may then be defined as
\begin{align}
C_S \coloneqq \max\limits_{R\in \mathcal{D}}\sum\limits_{r\in R}f(r)\prod\limits_{\substack{i\in r,\\i\neq s,t}}q_i,
\label{eq:netStateCap}
\end{align}
where $\mathcal{D}$ is the set of all possible combinations of link-disjoint $\st$ path sets in the network state, $r\in R$ is an $\st$ path that is link-disjoint with all other paths in $R$, $i\in r$ is a node that lies on path $r$; and $f(r)$ is a function defined on path $r$. $f(r)$ may, in principle, incorporate an entanglement measure (\eg, secret key rate, entanglement entropy, distillable entanglement, \etc) that is most relevant to the application that is being executed by nodes $s$ and $t$. As such entanglement measures typically depend on the specific hardware of the network nodes, link lengths, noise processes, entanglement generation protocols, among other factors (which collectively determine the type of bipartite entanglement that results on each link, as well as how each state evolves with time), we take here $f(r)=1$, for all $\st$ paths $r\in R$ to keep the analysis both simple and general. In other words, we compute only the expected \textit{number} of entangled pairs shared by $s$ and $t$ at the end of swapping. We remark that in the context of entanglement routing, a more general definition of network state capacity $C_S$ may exist: (\ref{eq:netStateCap}) need not be additive, for instance.

Given our definition of network state capacity above, the overall bipartite entanglement capacity of the network (with reference to nodes $s$, $t$) is given by
\begin{align}
C = \sum\limits_{S\in\mathcal{S}}P_SC_S,
\label{eq:capacity}
\end{align}
where $\mathcal{S}$ is the set of all possible network states, and $P_S$ is the probability of observing the network in state $S$ (see more on this in Section \ref{sec:probform}). While in this work we compute $C$ exactly for a number of networks, it is in principle possible to approximate it by computing the expectation over a subset of $\mathcal{S}$ corresponding to the most likely network states (with highest values of $P_S$). Arbitrarily tight upper- and lower-bounds on $C$ can be obtained by increasing the size of this subset, and keeping in mind that the minimum and maximum $C_S$ are given by $0$ and $C_{S_{\text{full}}}$, respectively, where $S_{\text{full}}$ is the network state with all elementary link-level entanglement successfully established. 

We emphasize that the definition of network capacity provided in (\ref{eq:capacity}) is given for the specific setup described earlier in the section, namely for the case in which quantum nodes are only capable of performing BSMs. This formula is a special instance of certain information-theoretic definitions found in literature that consider arbitrary and adaptive local operations and classical communication (LOCC) -- see Section~\ref{sec:relatedwork} for further discussion.

The remainder of this manuscript is organized as follows: in Section \ref{sec:relatedwork}, we discuss relevant background and related work. In Section \ref{sec:probform}, we  state our assumptions, describe our model of a quantum network, and formulate the problem in terms of $\st$ flow in a directed graph. In Section \ref{sec:miqcp}, we introduce the MIQCP for non-multiplexed networks and prove its validity, while in Section \ref{sec:multiplexed} we extend the framework to multiplexed networks. We present numerical examples in Section \ref{sec:numexamples} and make concluding remarks in Section \ref{sec:concl}.

\section{Related Work}
\label{sec:relatedwork}
Request scheduling, path selection, and entanglement routing -- problems which are closely interrelated in quantum networking -- have all previously received attention in literature.
Caleffi studied optimal entanglement routing in \cite{caleffi2017optimal}, where the goal was to determine the best (single) path between two nodes in a quantum network. Van Meter \etal adapted Dijkstra’s algorithm to quantum networks and demonstrated its usefulness for path selection in \cite{van2013path}.
Gyongyosi \etal proposed a decentralized routing scheme for finding the shortest path in a quantum network \cite{gyongyosi2017entanglement}. A number of virtual graph construction and routing techniques have also been proposed, \eg, \cite{schoute2016shortcuts, gyongyosi2018decentralized, chakraborty2019distributed}. Such approaches are especially useful for reducing quantum memory requirements on quantum nodes, as well as lowering latency during the routing process.

As noted in the previous section, a host of studies take an information-theoretic approach in their analyses of quantum network capacity and development of entanglement distribution protocols. Pirandola in \cite{pirandola2019end}, and Azuma \etal in \cite{azuma2016fundamental} establish upper bounds on the maximum rate at which entanglement can be distributed to two nodes in a quantum network. 
The authors of \cite{pirandola2019end} and \cite{harney2022end} considered both single- and multi-path routing scenarios, with the latter focusing on realistic networks, \eg, those with imperfect repeaters. In \cite{azuma2017aggregating}, the authors study entanglement distribution protocols (again from an information-theoretic perspective), for a family of quantum networks wherein independent repeater schemes run on each link. At the link level, details describing specific processes that generate entanglement, including \eg, purification, are abstracted away. The authors use a graph-theoretic result known as Menger’s Theorem, which quantifies the number of edge-disjoint $\st$ paths in a network, to derive protocols that can achieve information-theoretic limits. While the work mainly focuses on deterministic entanglement swapping, the authors provide a brief treatment of the probabilistic-swap case by proposing that swaps can be performed along each $\st$ path in a "knockout tournament" manner. In the context of our problem, neither this technique, nor Menger’s Theorem, can be applied directly: the swapping probabilities play a critical role in determining the optimal set of edge-disjoint paths to be utilized by $s$ and $t$.

In \cite{bauml2020linear},
the authors presented linear programs based on the max-flow min-cut theorem, with the goal of efficiently computing bounds on entanglement distribution rates. We remark that the max-flow min-cut theorem is also
inapplicable to our problem, as it does not offer assurances that flow will not split or merge. A number of studies obtained exact results only in idealized situations, or specific network topologies: for instance,
the authors of \cite{pirandola2019end} analyzed routing with ideal repeater nodes (namely, with perfect BSMs) and reduced the problem to finding the maximum flow through a quantum network. 
Dai \etal proposed a linear program for optimal remote entanglement distribution, in a scenario where quantum memories have the capability to store qubits for an infinite amount of time \cite{dai2020optimal}. Cicconetti \etal optimized request scheduling and path selection in quantum networks with multiple user requests over a finite time horizon in \cite{cicconetti2021request}, but did not consider entanglement swap failures. As mentioned in the previous section, we consider networks which reset the state at the end of each time slot; this means that there is no need to consider a time horizon beyond a single time slot.

Yet another collection of studies opted for heuristic strategies. Le \etal applied a deep neural network to address the problem of servicing multiple end-to-end entanglement requests in a quantum network where entanglement swapping always succeeds \cite{le2022dqra}. Victora \etal optimized distillable entanglement in quantum networks in \cite{victora2020purification}; here, multiple paths were considered, although entanglement swapping was assumed to succeed with unit probability. Several other QoS (Quality of Service)-conscious entanglement routing algorithms have been proposed, \eg, \cite{yang2022online,nguyen2022multiple,zeng2022multi,zhang2022multipath,nguyen2022maximizing,amer2020efficient,zhao2021redundant,ghaderibaneh2022pre,li2022fidelity,li2021effective}, with examples of QoS being bandwidth or throughput, latency, efficient resource use, fidelity, secret key rate; and number of and fairness to concurrent flows. 
Of these studies, perhaps the most similar \textit{formulation} to that of our work is found in \cite{zhao2021redundant}, although the problem is posed with an emphasis on resource allocation: in particular, redundant resource provisioning is proposed before any entanglement generation attempts take place. This work improved that of Shi \etal which introduced another (in general non-optimal) algorithm for concurrent entanglement routing \cite{shi2020concurrent}.

Another study with a similar formulation to that of ours (albeit, with the broader aim of distributing entanglement to multiple simultaneously-communicating source-destination pairs in a network), is that of \cite{chakraborty2020entanglement}, wherein the authors constructed a Linear Program (LP) to solve the entanglement distribution problem using a multi-commodity flow approach. It was assumed that entanglement swapping succeeds probabilistically although all repeaters were assumed to have equal BSM success probabilities. Users were assumed to have minimum end-to-end fidelity requirements which translated into a maximum permitted number of swapping operations. 
The rate maximization problem was solved using an edge-based formulation which involved the construction of an extended graph, resulting in an LP that scales efficiently with the number of variables within the optimization problem. The objective function of the optimization framework then summed over the flow out of all source nodes in the extended graph, but the assumption of equal-probability entanglement swaps allowed the authors to treat same-length paths in this graph (\ie, paths with the same number of hops) equally. Specifically, any path of length $j$ has a gain factor of $q^{j-1}$, where $q$ is the BSM success probability. The reason why we cannot (at least, directly or through simple/obvious modifications) adopt our problem as a special case of this multi-commodity flow approach, is that in our formulation, node heterogeneity (arising from potentially different swapping success probabilities) does not allow us to de-couple gain factors $q_i$ from the flow variables -- two equal-length paths could have substantially different flow capacities, making the $q_i$'s a more fundamental aspect of the problem.

Pant \etal investigated multi-path entanglement routing in a setting where BSMs succeed probabilistically \cite{pant2019routing}. The authors proposed routing protocols based on global or local knowledge of the network topology and link state. For a grid network topology, the authors introduced a greedy algorithm, which while not optimal in general, yields a rate that is within a constant factor of the optimum. As the setup and assumptions in this work are most similar to those in our paper, we use a similar algorithm as a comparison basis of our results. 
Patil \etal improved on the results of \cite{pant2019routing} in \cite{patil2022entanglement}, by allowing {$n$-GHZ} measurements at network nodes, for $n\geq 3$. Leone \etal also experimented with multi-path routing algorithms but without optimality guarantees \cite{leone2021qunet}. 
On a final note,
 entanglement routing has also been studied within the context of entanglement percolation, as in \cite{acin2007entanglement}; 
this, however, lies outside the scope of the current study.

\section{Problem Formulation}
\label{sec:probform}
In this work we consider a quantum network that operates under the following assumptions:
\begin{itemize}
\item[(1)] The network is comprised of quantum repeater nodes interconnected by physical links (\eg, optical fiber) in an arbitrary topology.
\item[(2)] During execution, time is discretized into slots. A time slot is divided into two stages: I, wherein all links attempt entanglement generation, and II, wherein all nodes attempt entanglement swapping. We will sometimes refer to the network state immediately after stage I as a \textit{snapshot}. In stage II, only BSMs may occur -- nodes are incapable of other types of entangling measurements.
\item[(3)] A network node has a number of links, each of which is allocated quantum storage wherein entangled states may be stored. If the link implements multiplexing (\eg, temporal, spatial, or frequency), so that it may generate up to $k>1$ entangled states per time slot, then the link has $k$ dedicated storage qubits; otherwise, it has a single storage qubit.
\item[(4)] At the end of a time slot (after entanglement swapping has taken place), any unconsumed entanglement -- entangled states that were not involved in swapping operations --  is discarded.  
\item[(5)] End-to-end entanglement, when generated successfully across any $\st$ path in the network using the processes described above, is assumed to be of sufficiently high fidelity for the application(s) at nodes $s$ and $t$ desiring the Bell pair.
\end{itemize}

The first part of assumption (2) implies that the network is operating in a coordinated manner: first, entanglement is attempted, and subsequently entanglement routing decisions are executed using global knowledge of the network via entanglement swapping. Such schemes, which may be accomplished with the help of a centralized controller, have been proposed as a means of scaling up quantum networks and ensuring QoS to applications, \eg, delivery of high fidelity end-to-end entanglement. See, \eg, \cite{skrzypczyk2021architecture} which proposes such an architecture as a way of circumventing some of the limitations of near-term multi-user quantum networks, \cite{pompili2021experimental} which adopts the aforementioned, and \cite{kozlowski2019towards} where the authors suggest a software-defined networking (SDN) approach towards such ends.

\section{``Leaky'' Disjoint $\st$ Flow in Graphs with Unit Edge Capacity}
\label{sec:miqcp}
\begin{figure}
\centering
\subfloat[snapshot 1]{\includegraphics[width=0.8\linewidth]{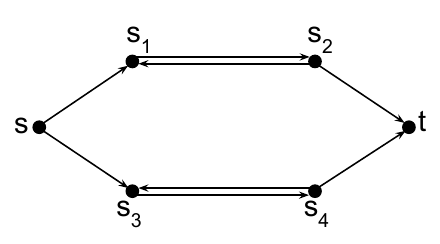}\label{fig:snap1}}\quad
\subfloat[snapshot 2]{\includegraphics[width=0.8\linewidth]{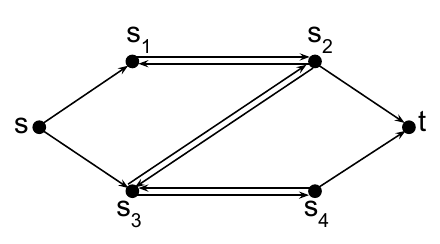}\label{fig:snap2}}
\caption{Two possible network snapshots after link-level entanglement generation attempts.}
\label{fig:netex}
\end{figure}
We focus now on solving the flow problem for a non-multiplexed network snapshot. While entanglement distribution is inherently symmetric (with no difference in treatment of the end nodes), when cast as a flow problem necessity arises for one node to be designated the source ($s$) and the other the terminal ($t$): flow \textit{has} directionality. Either node may assume either role -- see discussion at the end of this section. To proceed with this formulation, however, we must devise a directed representation of the undirected graph $\mathcal{G}$ representing a network state (snapshot) immediately after stage I (link-level entanglement generation). 

 From now on, we may assume that the snapshot $\mathcal{G}$ contains at least one path between $s$ and $t$ (since otherwise, the capacity of the snapshot is trivially zero). Let $G=(V,E)$ represent the corresponding directed graph for the snapshot. We construct this graph by converting each undirected edge $(u,v)$ of $\mathcal{G}$ into a set of directed edges $(u,v)$ and $(v,u)$, for any $u,v\notin\{s,t\}$. For undirected edges $(s,u)$, it suffices to include only directed edges $(s,u)$ within $E$, and for any undirected edges $(t,u)$, it suffices to include only directed edges $(u,t)$ within $E$. Figure \ref{fig:netex} depicts two possible directed snapshots resulting from entanglement generation attempt stages of the network shown in Figure \ref{fig:netEx}.

In addition to the basic flow conservation constraint frequently encountered in flow problems, our formulation must also account for the following two phenomena: $(i)$ entanglement swapping is subject to failures, and $(ii)$ each swap operation consumes two entangled links. Incorporating $(i)$ can be accomplished simply by modeling each node in the snapshot as lossy, with a gain factor equal to the entanglement swapping success probability. Incorporating $(ii)$ on the other hand is less straightforward, as it involves designing constraints that ensure:
\begin{itemize}
\item two opposing flows cannot coexist, \ie, edges $(i,j)$ and $(j,i)$ cannot simultaneously carry flow -- this corresponds to the assumption that any two neighboring nodes have at most one entangled link between them;
\item flows do not split into two or more portions as they propagate through the network; and
\item each flow remains disjoint from other flows (flows do not merge); in the special case of a network with unit edge capacities, this amounts to ensuring that each edge is used by at most one flow.
\end{itemize}
We call the second constraint \textit{flow intactness} and the third \textit{flow disjointness}.
As we shall see, these requirements may be satisfied with the help of indicator variables that help direct and restrict the flow as necessary.
We begin by introducing a mixed-integer quadratically constrained program (MIQCP) that solves this constrained flow problem in Section \ref{subsec:miqcp_form}, where we also intuitively explain the purpose of each constraint. In Section \ref{subsec:miqcp_val}, we prove more rigorously the validity of the MIQCP.
\subsection{MIQCP Formulation}
\label{subsec:miqcp_form}
Let $F_{ij}$, $i\neq t$, $j\neq s$, represent flow from node $i$ to $j$.
Consider the following objective function and set of constraints:
\begin{align}
&\max\sum\limits_{(j,t)\in E} F_{jt}\label{eq:obj}\\
&0\leq F_{ij}\leq 1, \quad\forall (i,j)\in E,\\
&x_{ijk}\in \{0,1\},\quad \forall (i,j),~(j,k)\in E ~/\ i\neq k,~ j\neq s,t,\\
&\sum\limits_{k}(x_{ijk}+x_{kji})\leq 1\quad\forall(i,j)\in E,\label{eq:xijk3}\\
&x_{ijk}(F_{ij}q_j - F_{jk}) = 0,\quad \forall (i,j), (j,k) \in E,\label{eq:xflow1}\\
&\sum\limits_k x_{ijk} \geq F_{ij}, \quad\forall (i,j) \in E~/ j \neq t,\label{eq:xflow2}\\
&\sum\limits_{j\in V}(F_{ij}-F_{ji}q_i)=0,\quad\forall i\in V,~ i\neq s,t.\label{eq:flowcons}
\end{align}

An intuitive explanation of each line is as follows: first,
the objective function maximizes the flow to node $t$, and since flow can only originate at node $s$, the solution maximizes the overall $\st$ flow. In Section \ref{sec:miqcp_validity}, we explain why this is equivalent to obtaining the capacity of our snapshot graph.
Next, we have the constraint on the flow variables $F_{ij}$: the upper bound of one is due to the fact that we have one quantum memory per physical link interface, and $F_{ij}\in[0,1]$ is due to the fact that our "internal" network nodes are lossy (internal nodes are all nodes in $V$ except $s$ and $t$). 
In the next line, we introduce binary variables $x_{ijk}$ defined for an internal node $j$ and two of its physical neighbors $i$ and $k$. These variables
help us perform a constrained form of bipartite matching between the incoming and the outgoing edges to/from node $j$ . 

Then, constraint (\ref{eq:xijk3}) enacts the flow disjointness and intactness constraints, which consist of ensuring that an edge admits flow from at most one other node, and that an edge releases flow to at most one other node. 
Further, constraint (\ref{eq:xijk3}) ensures that two neighboring nodes $i$ and $j$ are only allowed to share flow in one direction.
Next, constraint (\ref{eq:xflow1}) ensures that whenever flow travels from $i$ to $j$ and into $k$, it must be conserved. Note that this is a more refined flow constraint than that of $(\ref{eq:flowcons})$, which conserves flow at each node but does not necessarily ensure the proper amount of flow on each edge as dictated by the intactness and disjointness constraints. Finally, constraint (\ref{eq:xflow2}) ensures that flow cannot exist on an edge unless a bipartite matching constraint allows it -- \ie, for a flow to be non-zero, there must be a variable $x_{ijk}$ indicating that the edge is actually being used to carry flow.
\subsection{MIQCP Validity}
\label{sec:miqcp_validity}
\begin{figure}[t]
\centering
\subfloat[]{\includegraphics[width=0.2\textwidth]{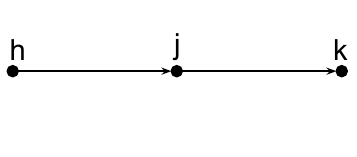}\label{fig:EX1}}\qquad
\subfloat[]{\includegraphics[width=0.2\textwidth]{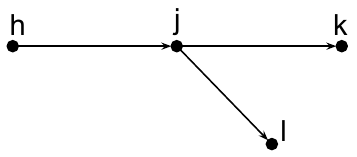}\label{fig:EX2}}\qquad
\label{fig:lemmaproof1}
\caption{Different scenarios relevant to Lemma \ref{lemma:noxijknoflow} proof. Directed arrows indicate presence of flow in that direction.}
\end{figure}
\label{subsec:miqcp_val}
We next study the validity of the MIQCP given by $(\ref{eq:obj})$-$(\ref{eq:flowcons})$. This involves proving the following:
\begin{itemize}
\item That neighboring nodes share flow in one direction at a time: 
\begin{align}
F_{ij}>0 \implies F_{ji}=0,~ \forall (i,j)\in E;
\end{align}
\item That flow intactness and disjointness hold. Formally, this means the following: given a node $j\neq t$, first construct the sets
\begin{align}
    \mathcal{I}_j &= \{F_{ij}: (i,j)\in E, F_{ij}>0\},\\
    \mathcal{O}_j &= \{F_{jk}: (j,k)\in E, F_{jk}>0\}.
\end{align}
In other words, $\mathcal{I}_j$ is the set of non-zero incoming flows and $\mathcal{O}_j$ is the set of outgoing flows with respect to $j$. We require the existence of a one-to-one correspondence between these sets, which satisfies the condition that for a flow $F_{ij}\in \mathcal{I}_j$, 
$F_{ij}q_j=F_{jk}$ for a flow $F_{jk}\in\mathcal{O}_j$;
\item That flows in the network follow valid $\st$ paths. A valid, $l$-hop path is defined by a sequence of positive flows $F_{i_0,i_1}$, $F_{i_1,i_2}$, $\dots$, $F_{i_l,i_{l+1}}$ satisfying $F_{i_k,i_{k+1}}q_{i_{k+1}}=F_{i_{k+1}}$, $k=1,2,\dots,l$, where the set of nodes $\{i_1,i_2,\dots, i_l\}\subseteq V$. Note that nodes in the path may in principle repeat, although this would not yield optimal results with sub-unit gain factors. A valid $\st$ path has the additional property that $i_0=s$ and $i_{l+1}=t$;
\item That each constraint is necessary.
\end{itemize}
Having defined path validity, as well as flow intactness and disjointness, we are now able to establish a direct connection between the snapshot capacity (\ref{eq:netStateCap}) with $f(r)=1$, $\forall r\in R$, and the MIQCP objective (\ref{eq:obj}). Namely, let us assume that all three conditions hold for any flow in the graph. Then, for a given $F_{jt}>0$, the flow must have originated at node $s$ and followed a valid path $s\to n_1\to n_2\to\dots\to n_k\to t$ through the network, all the while staying intact and disjoint. Since the objective is to maximize flow -- a value that lies in $[0,1]$, it follows that $F_{s,n_1}=1$. Combining this with the aforementioned flow properties allows us to deduce that $F_{jt}=q_1q_2\dots q_k$, where $q_i$, $i\in\{1,\dots,k\}$, is the gain factor associated with node $n_i$. This expression corresponds to a single term in the sum of (\ref{eq:netStateCap}), and the flow maximization objective then results in finding the optimal set of link-disjoint paths in the graph. We thus conclude that the value of the MIQCP objective function for a given snapshot is equivalent to the snapshot's capacity.

\subsubsection{Neighboring nodes share flow in at most one direction}
We begin with some useful lemmas.
\begin{lemma} 
\label{lemma:noxijknoflow}
For any edge $(j,k)$ s.t. $j\neq s$:
\begin{align*}
\sum\limits_{i}x_{ijk}=0 \implies F_{jk} =0.
\end{align*}
Intuitively, this result states that if no edge $(i,j)$ carries flow into $(j,k)$, then the flow value on $(j,k)$ is zero.
\end{lemma}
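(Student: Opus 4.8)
The plan is to read the binary variables $x_{ijk}$ as a partial matching at node $j$ between the edges entering $j$ and the edges leaving $j$, and then to combine this matching structure with flow conservation at $j$ to show that any outgoing edge that is \emph{not} matched to an incoming edge can carry no flow. Note first that the hypothesis $j\neq s$ together with the graph construction (which gives $t$ only incoming edges, so that no edge $(t,k)$ lies in $E$) forces $j\neq t$; hence the conservation constraint (\ref{eq:flowcons}) is available at $j$.

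I would first extract two matching bounds from (\ref{eq:xijk3}). Read at the edge $(i,j)$ and discarding the non-negative opposing terms, it gives $\sum_k x_{ijk}\le 1$, so each incoming edge is matched to at most one outgoing edge. Read instead at the reversed edge $(k,j)$, it gives $\sum_i x_{ijk}\le 1$, so each outgoing edge receives a match from at most one incoming edge. Next, (\ref{eq:xflow2}) forces every incoming edge $(i,j)$ with $F_{ij}>0$ to be matched: since $\sum_k x_{ijk}$ is an integer lying in $[F_{ij},1]$ with $F_{ij}>0$, it equals $1$, so there is a unique successor $\sigma(i)$ with $x_{ij\sigma(i)}=1$, and (\ref{eq:xflow1}) then pins the matched outgoing flow to $F_{j\sigma(i)}=q_jF_{ij}$.

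The core step is a counting identity at $j$. Writing $P\coloneqq\{i:F_{ij}>0\}$, the per-outgoing-edge bound makes $\sigma$ injective on $P$, so $\sum_{i\in P}F_{j\sigma(i)}=q_j\sum_{i\in P}F_{ij}=q_j\sum_iF_{ij}$ equals the total outflow $\sum_{k'}F_{jk'}$ by (\ref{eq:flowcons}). Since $\sigma$ is a bijection from $P$ onto its image, the matched outgoing edges already account for this entire total, forcing $\sum_{k'\notin\sigma(P)}F_{jk'}=0$; as all flows are non-negative, every outgoing edge outside $\sigma(P)$ carries zero flow. The hypothesis $\sum_i x_{ijk}=0$ says $(j,k)$ is matched to no incoming edge, so $k\notin\sigma(P)$ and therefore $F_{jk}=0$.

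The step I expect to be the main obstacle is the per-outgoing-edge bound $\sum_i x_{ijk}\le 1$ --- equivalently, ruling out a \emph{merge} of two incoming flows onto one outgoing edge at $j$ --- because this is what makes $\sigma$ injective and lets the counting close. For an internal target $k$ it follows cleanly by instantiating (\ref{eq:xijk3}) at the reversed edge $(k,j)\in E$, but when $k=t$ the reversed edge $(t,j)$ is absent from $E$, so no instance of (\ref{eq:xijk3}) bounds $\sum_i x_{ijt}$; handling the terminal edge (for instance through a supplementary in-degree bound on edges into $t$, or a dedicated argument on the flow delivered to $t$) is where I expect the genuine work to lie, since without it a feasible point could in principle declare two incoming flows into $(j,t)$ and shunt the surplus, by conservation, onto an unmatched internal edge.
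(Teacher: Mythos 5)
Your argument is, at its core, the same as the paper's: both rest on flow conservation at $j$ combined with the matching structure that constraints (\ref{eq:xflow2}), (\ref{eq:xijk3}) and (\ref{eq:xflow1}) impose at $j$, and both conclude that an outgoing edge matched to no incoming edge cannot absorb any of the outflow. You package this as a single counting identity via the injective successor map $\sigma$, whereas the paper runs an iterative cancellation in the balance equation (each negative term $-q_jF_{hj}$ is paired with a fresh positive term $F_{jl}$ with $l\neq k$, so $F_{jk}$ can never be cancelled and must be zero). Your version is tidier, but it is the same mechanism.

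The obstacle you flag at the end is genuine --- and it is not resolved in the paper's proof either. Injectivity of $\sigma$ (equivalently, the paper's step asserting $m\neq l$ for a second matched input) needs the per-output bound $\sum_i x_{ijl}\le 1$. The instance of (\ref{eq:xijk3}) that would deliver this is the one indexed by the \emph{reversed} edge $(l,j)$; the paper invokes the instance at $(j,l)$, which only constrains the matching at node $l$, and when $l=t$ the edge $(t,j)$ is absent from $E$, so no instance of (\ref{eq:xijk3}) bounds $\sum_i x_{ijt}$ at all. This is not merely a cosmetic gap: one can exhibit a fully feasible point in which two inputs $(a,j)$ and $(b,j)$ are both matched to $(j,t)$ (whence $F_{aj}=F_{bj}$ by (\ref{eq:xflow1})), and conservation (\ref{eq:flowcons}) at $j$ then forces the surplus $q_jF_{bj}$ onto an unmatched outgoing edge $(j,c)$, giving $\sum_i x_{ijc}=0$ with $F_{jc}>0$ --- contradicting the lemma as stated. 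Repairing this requires either an additional constraint of the form $\sum_i x_{ijt}\le 1$ for each edge $(j,t)\in E$, or a separate argument that such configurations cannot change the optimum; neither your proposal nor the paper supplies one. So your attempt is exactly as complete as the published proof, and more candid about where the remaining work lies.
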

\begin{proof}
Our proof is by contradiction. Assume that $F_{jk}>0$, for a node $j\neq s$, while $\sum_{i}x_{ijk}=0$. Recall that there are no outgoing edges at node $t$ by our definition of $G$, so we know that $j$ is an internal node. Then by condition (\ref{eq:flowcons}) on node $j$, we see that
\begin{align}
F_{jk}+\sum\limits_{i\neq k}F_{ji}-q_j\sum\limits_{i}F_{ij} = 0,
\label{eq:jflow1}
\end{align}
which implies the existence of some node $h$ such that ${F_{hj}>0}$ (since if no such node exists, (\ref{eq:flowcons}) is violated due to the left-hand side being strictly positive). Thus, we may rewrite (\ref{eq:jflow1}) as
\begin{align}
F_{jk}-q_jF_{hj}+\sum\limits_{i\neq k}F_{ji}-q_j\sum\limits_{i\neq h}F_{ij} = 0.
\label{eq:jflow2}
\end{align}
We do not require that $h\neq k$ for our proof.
Figure \ref{fig:EX1} depicts the scenario where $h \neq k$.

Since $F_{hj}>0$, by condition (\ref{eq:xflow2}) there exists a node $l$ such that $x_{hjl}=1$, and by condition (\ref{eq:xijk3}) applied to edge $(h,j)$, this $l$ is unique. Further, given the way that the $x_{ijk}$ variables are defined, we can be sure that $l\neq h$. 
Thus, if $h=k$, \ie, ${x_{kjl}=1}$, then we know by the definition of the variables $x_{ijk}$ that $l\neq k$.
On the other hand, if $h\neq k$, then by our assumption that $\sum_i x_{ijk}=0$, it follows that $x_{hjk}=0$, so that $l\neq k$.  \textit{I.e.}, in either case, $l\neq k$ -- an important observation that, as we shall see shortly, means that flow $F_{jk}$ will remain in the equation.

By condition (\ref{eq:xflow1}) applied to edges $(h,j)$ and $(j,l)$, we must have
\begin{align}
F_{hj}q_j = F_{jl},
\label{eq:jflow3}
\end{align}
which implies that $F_{jl}$ is strictly positive, so that we have flow from $j$ to $l$, as depicted in Figure \ref{fig:EX2} for the case where $h\neq k$. Thus, we may write (\ref{eq:jflow2}) as
\begin{align}
F_{jk}-q_jF_{hj}+F_{jl}+\sum\limits_{i\neq k,l}F_{ji}-q_j\sum\limits_{i\neq h}F_{ij} = 0,
\end{align}
and by (\ref{eq:jflow3}),
\begin{align}
F_{jk}+\sum\limits_{i\neq k,l}F_{ji}-q_j\sum\limits_{i\neq h}F_{ij} = 0.
\label{eq:jflow4}
\end{align}

Comparing Eqs. (\ref{eq:jflow1}) and (\ref{eq:jflow4}), we see that the only difference is that in the latter, two components on the left-hand side have cancelled each other out: one from each of the sums, while the $F_{jk}$ component has not changed at all. If we now attempt to balance out $F_{jk}$ using another (negative) component from the second sum above, say $F_{gj}$ ($g\neq h)$, then conditions (\ref{eq:xflow2}) and (\ref{eq:xijk3}) ensure the existence of a unique $m\neq g$ for which $x_{gjm}=1$. Note that $m\neq k$ since either $(i)$ $g\neq k$, so that if $m=k$ we would have $x_{gjk}=1$, which contradicts our assumption that $x_{ijk}=0$, $\forall i$, or $(ii)$ $g=k$, so that $x_{kjm}$ does not allow $m=k$ by definition. Further, $m\neq l$, since we already have $x_{hjl}=1$, so we cannot also have $x_{gjl}=1$  (this is prohibited by condition (\ref{eq:xijk3}) applied to edge $(j,l)$). By condition (\ref{eq:xflow1}) we thus have positive flow $F_{jm}$, for $m\neq k,l$, which must come from the first sum in (\ref{eq:jflow4}). 

It is evident that any time we attempt to balance out flow $F_{jk}$ with a negative component from (\ref{eq:jflow1}), we force another (strictly positive) flow into node $k$, and this positive flow cancels the negative component, leaving $F_{jk}$ unaffected. We can continue this cancellation process until there are no longer any negative components left in the flow balance equation, which would leave us with a strictly positive value on the left-hand side that is $\geq F_{jk}$. From this, we conclude that $\sum_{i}x_{ijk}=0 \implies F_{jk} =0$.
\end{proof}

\begin{lemma}
\label{lemma:nobidirectflow}
For any edge $(i,j) \in E$, $F_{ij}>0\implies F_{ji}=0$.
\end{lemma}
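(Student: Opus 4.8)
The plan is to leverage Lemma~\ref{lemma:noxijknoflow} together with the matching constraint~(\ref{eq:xijk3}), so that the proof collapses to a short chain of implications rather than the delicate cancellation argument needed for Lemma~\ref{lemma:noxijknoflow} itself.

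First I would dispose of the trivial cases. By the construction of $G$, the reverse edge $(j,i)$ belongs to $E$ only when both $i$ and $j$ are internal nodes (i.e. $i,j\notin\{s,t\}$); if $i=s$ or $j=t$, then $(j,i)$ is absent from $E$, so $F_{ji}$ is not even a variable and the claim holds vacuously. Hence it suffices to treat the case where $(i,j),(j,i)\in E$ with $i,j$ both internal, which in particular guarantees $j\neq s$ and $j\neq t$ --- precisely the side conditions required to invoke constraint~(\ref{eq:xflow2}) and Lemma~\ref{lemma:noxijknoflow} below.

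The core of the argument is then a three-step deduction starting from the hypothesis $F_{ij}>0$. Since $j\neq t$, constraint~(\ref{eq:xflow2}) applied to edge $(i,j)$ gives $\sum_k x_{ijk}\geq F_{ij}>0$, and because the $x_{ijk}$ are binary this forces $\sum_k x_{ijk}\geq 1$. Next, constraint~(\ref{eq:xijk3}) applied to the same edge $(i,j)$ reads $\sum_k(x_{ijk}+x_{kji})\leq 1$; combining it with the previous inequality yields $\sum_k x_{kji}\leq 1-\sum_k x_{ijk}\leq 0$, and since each $x_{kji}\geq 0$ we conclude $\sum_k x_{kji}=0$. Finally, since $j\neq s$, Lemma~\ref{lemma:noxijknoflow} applied to edge $(j,i)$ (identifying the lemma's $k$ with our $i$) turns $\sum_k x_{kji}=0$ into $F_{ji}=0$, which is exactly the claim. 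Note that no proof by contradiction is needed --- the implication runs directly.

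I expect the only real obstacle to be the bookkeeping with the index conventions of the $x$-variables: one must be careful that~(\ref{eq:xflow2}) is the right tool to force an outgoing match for the edge $(i,j)$ \emph{entering} $j$ (giving $\sum_k x_{ijk}\geq 1$), while Lemma~\ref{lemma:noxijknoflow} is the right tool to forbid flow on $(j,i)$ \emph{leaving} $j$, and that these two sums are exactly the pair appearing in~(\ref{eq:xijk3}). Once this pairing is set up correctly, the remaining work is trivial; it is worth double-checking only that the integrality and nonnegativity of the $x$-variables are what convert the fractional inequalities into the clean $\geq 1$ and $=0$ statements used above.
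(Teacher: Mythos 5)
Your proof is correct and follows essentially the same route as the paper's: use constraint~(\ref{eq:xflow2}) to force some $x_{ijk'}=1$, use constraint~(\ref{eq:xijk3}) on edge $(i,j)$ to conclude $\sum_k x_{kji}=0$, and then apply Lemma~\ref{lemma:noxijknoflow} to edge $(j,i)$ to get $F_{ji}=0$. Your added bookkeeping on the trivial cases ($i=s$ or $j=t$) and on the index identification is consistent with the paper's opening remark that such reverse edges do not exist in $G$.
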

\begin{proof}
Note that by $G$'s construction, edges $(j,s)$ and $(t,j)$ do not exist, so we are only concerned with edges $(i,j)$ for which $i,j\notin\{s,t\}$.
If $F_{ij}>0$, then by condition (\ref{eq:xflow2}), there exists a node $\pr{k}$ for which $x_{ij\pr{k}}=1$. Then by condition (\ref{eq:xijk3}), $\sum_{k}x_{kji}=0$. By Lemma \ref{lemma:noxijknoflow}, ${F_{ji}=0}$.
\end{proof}

\subsubsection{Flow intactness and disjointness}
\label{subsubsec:flow_intact_disjoint}
\begin{lemma}
\label{lemma:flowsplitmerge}
Flow cannot split or merge. In other words, there is a one-to-one correspondence between incoming and outgoing flow for any node's flow balance equation. A consequence of this is that there is an even number of components for any flow conservation equation (\ref{eq:flowcons}).
\end{lemma}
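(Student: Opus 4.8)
The plan is to fix an arbitrary internal node $j$ (so $j\neq s,t$) and exhibit an explicit bijection between the edges carrying flow \emph{into} $j$ and the edges carrying flow \emph{out of} $j$; this is cleaner than arguing directly about the value-sets $\mathcal{I}_j,\mathcal{O}_j$, and the stated correspondence on values follows by identifying each nonzero $F_{ij}$ with its image $F_{jk}$. First I would build the candidate map. For each incoming edge $(i,j)$ with $F_{ij}>0$, constraint (\ref{eq:xflow2}) forces $\sum_k x_{ijk}\geq F_{ij}>0$, so some matching variable $x_{ijk}$ equals $1$, while constraint (\ref{eq:xijk3}) applied to the edge $(i,j)$ gives $\sum_k x_{ijk}\leq 1$, so this $k$ is unique. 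I define $\phi\big((i,j)\big)=(j,k)$. Since $x_{ijk}=1$, constraint (\ref{eq:xflow1}) yields $F_{ij}q_j=F_{jk}$, and because $q_j>0$ this forces $F_{jk}>0$; hence $\phi$ indeed lands among the outgoing flow-carrying edges and already encodes the required relation $F_{ij}q_j=F_{jk}$.

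Next I would prove injectivity. Suppose two distinct incoming edges $(i_1,j),(i_2,j)$ both mapped to $(j,k)$, i.e.\ $x_{i_1jk}=x_{i_2jk}=1$. The key bookkeeping observation is that the bound I need, $\sum_i x_{ijk}\leq 1$, is precisely constraint (\ref{eq:xijk3}) read off the edge $(k,j)$ rather than $(i,j)$: substituting $(k,j)$ into (\ref{eq:xijk3}) produces $\sum_{c}\left(x_{kjc}+x_{cjk}\right)\leq 1$, whose second summand is exactly $\sum_i x_{ijk}$. Thus at most one incoming edge may be matched to the outgoing edge $(j,k)$, contradicting the existence of two, so $\phi$ is injective. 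I expect this re-indexing --- recognizing which instance of (\ref{eq:xijk3}) controls the ``fan-in'' at an outgoing edge --- to be the main obstacle, since the constraint is written with the shared node in the middle and its two summands play asymmetric roles: one bounds the fan-out of an incoming edge, the other the fan-in of an outgoing edge.

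Finally I would establish surjectivity using the lemmas already proved. Given an outgoing edge $(j,k)$ with $F_{jk}>0$, the contrapositive of Lemma \ref{lemma:noxijknoflow} gives $\sum_i x_{ijk}\geq 1$; combined with the $\leq 1$ bound from the injectivity step, this forces $\sum_i x_{ijk}=1$, so a unique incoming edge $(i,j)$ has $x_{ijk}=1$, and (\ref{eq:xflow1}) again gives $F_{ij}q_j=F_{jk}>0$, whence $F_{ij}>0$ and $\phi\big((i,j)\big)=(j,k)$. Therefore $\phi$ is onto, completing the one-to-one correspondence with $F_{ij}q_j=F_{jk}$ holding on every matched pair. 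The even-count consequence is then immediate: the nonzero terms of the balance equation (\ref{eq:flowcons}) at $j$ partition into the incoming flows and the outgoing flows, and since $\phi$ pairs these bijectively the two collections have equal cardinality, so their total is even.
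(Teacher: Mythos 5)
Your proof is correct and follows essentially the same route as the paper's: both construct the incoming-to-outgoing matching from constraints (\ref{eq:xflow2}), (\ref{eq:xijk3}), and (\ref{eq:xflow1}), and both invoke Lemma \ref{lemma:noxijknoflow} to obtain the reverse direction. Your write-up is merely a bit more explicit in separating injectivity (via the fan-in summand of (\ref{eq:xijk3}) read off edge $(k,j)$) from surjectivity, whereas the paper folds both into the uniqueness of the node $l$ with $x_{lik}=1$; the substance is identical.
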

\begin{proof}
For any node $i\neq s,t$, the flow conservation constraint (or flow balance equation), (\ref{eq:flowcons}), can be rewritten as
\begin{align}
\sum\limits_{\pr{j}\in V/F_{i\pr{j}}>0}F_{i\pr{j}}=q_i\sum\limits_{j\in V/F_{ji}>0}F_{ji},
\label{eq:flowconsi}
\end{align}
where, by Lemma \ref{lemma:nobidirectflow}, no $\pr{j}$ on the left-hand side may equal to a $j$ on the right-hand side. Consider any $F_{ji}$. Since it is positive (and since $i\neq t$), we must have, by constraint (\ref{eq:xflow2}), some node $k$ for which $x_{jik}=1$, and by constraint (\ref{eq:xijk3}) this $k$ must be unique. Then by constraint (\ref{eq:xflow1}), we must have $F_{ji}q_i = F_{ik}$, implying that $F_{ik}>0$. Then, Lemma \ref{lemma:noxijknoflow} implies that there exists some node $l$ for which $x_{lik}=1$, and constraint (\ref{eq:xijk3}) ensures that this $l$ is unique. But since we already have $x_{jik}=1$, it must be that $l=j$. Thus, for every flow $F_{ji}$ on the right side of (\ref{eq:flowconsi}) there is a unique corresponding flow $F_{ik}$ on the left side of (\ref{eq:flowconsi}), proving the lemma.
\end{proof}
Lemma \ref{lemma:flowsplitmerge} implies flow intactness and disjointness. 

\subsubsection{Path validity}
\label{sec:miqcp:path_val}
\begin{figure}
\centering
\includegraphics[width=0.48\linewidth]{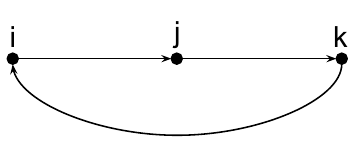}\label{fig:ex3}
\caption{A valid flow scenario described in Section \ref{sec:miqcp:path_val}.}
\label{fig:ex_flow_cycle}
\end{figure}
First, we show that flows in the network originate at node $s$ and terminate at node $t$. To show this, we consider a flow $F_{ij}>0$, with $i,j\neq s,t$, respectively. Let us first see where the flow ends: since $j\neq t$, there exists a node $k\neq i$ for which $x_{ijk}=1$ (cond. (\ref{eq:xflow2})), which then implies a positive $F_{jk}$ (cond. (\ref{eq:xflow1})).  If $k=t$, we are done, as cond. (\ref{eq:xflow2}) no longer applies, and there are no outgoing edges from $t$. Otherwise, cond. (\ref{eq:xflow2}) is applicable: $\exists~ l\neq j$ s.t. $x_{jkl}=1$, implying a positive flow $F_{kl}$. We note here that in principle, $l$ could be the same node as $i$ as shown in Figure \ref{fig:ex_flow_cycle}, albeit likely a suboptimal option, especially if $i$ is a lossy node (\ie, where $q_i<1$). Continued applications of conds. (\ref{eq:xflow2}) and (\ref{eq:xflow1}), subject to constraint (\ref{eq:xijk3}), continue until the sole stopping condition -- positive flow that ends at node $t$.

We now focus on the origin of flow $F_{ij}$, by backtracking it from node $i$ with repeated applications of Lemma \ref{lemma:noxijknoflow}. First, note that by the contrapositive of the lemma's hypothesis, $F_{ij}>0\implies \sum_{l}x_{lij}\neq 0$, implying that $x_{hij}=1$ for some $h\neq j$. By cond. (\ref{eq:xflow1}), $F_{hi}>0$, and if
$h=s$, we are done since  Lemma \ref{lemma:noxijknoflow} no longer applies, and there are no edges that go into $s$. Otherwise, the lemma does apply, and we continue its application along with cond. (\ref{eq:xflow1}) until the only stopping condition is reached, \ie, until we find a flow originating at node $s$. Finally, recall that, by Lemma \ref{lemma:flowsplitmerge}, no flow in the network may split or merge -- together with the discussion above, it is evident that the paths are valid, \ie, they are disjoint, intact paths that originate at $s$ and end at $t$.

\subsubsection{Necessity of each constraint}
\begin{figure}
\centering
\subfloat{\includegraphics[width=0.8\linewidth]{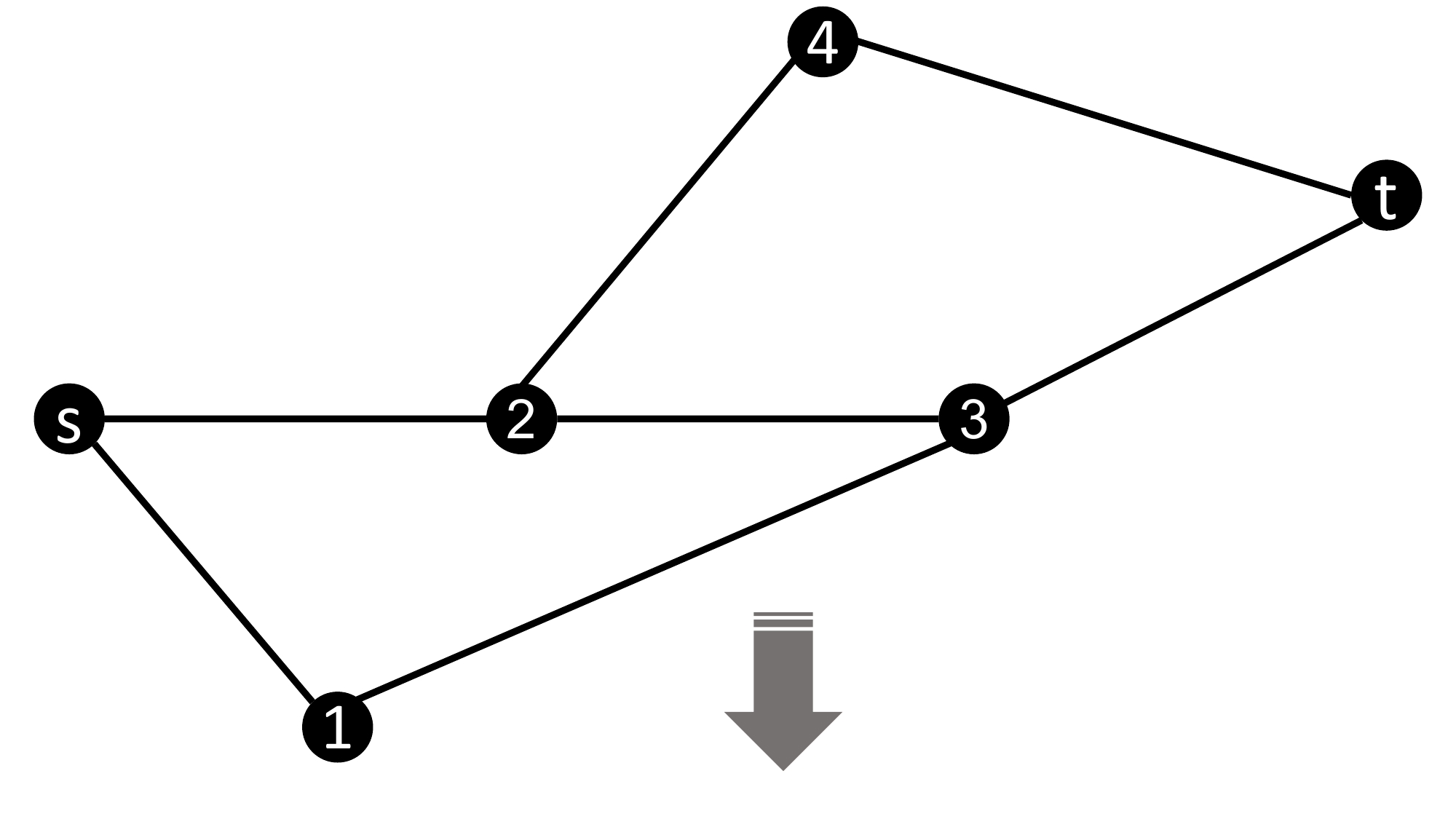}}\\
\subfloat{\includegraphics[width=0.8\linewidth]{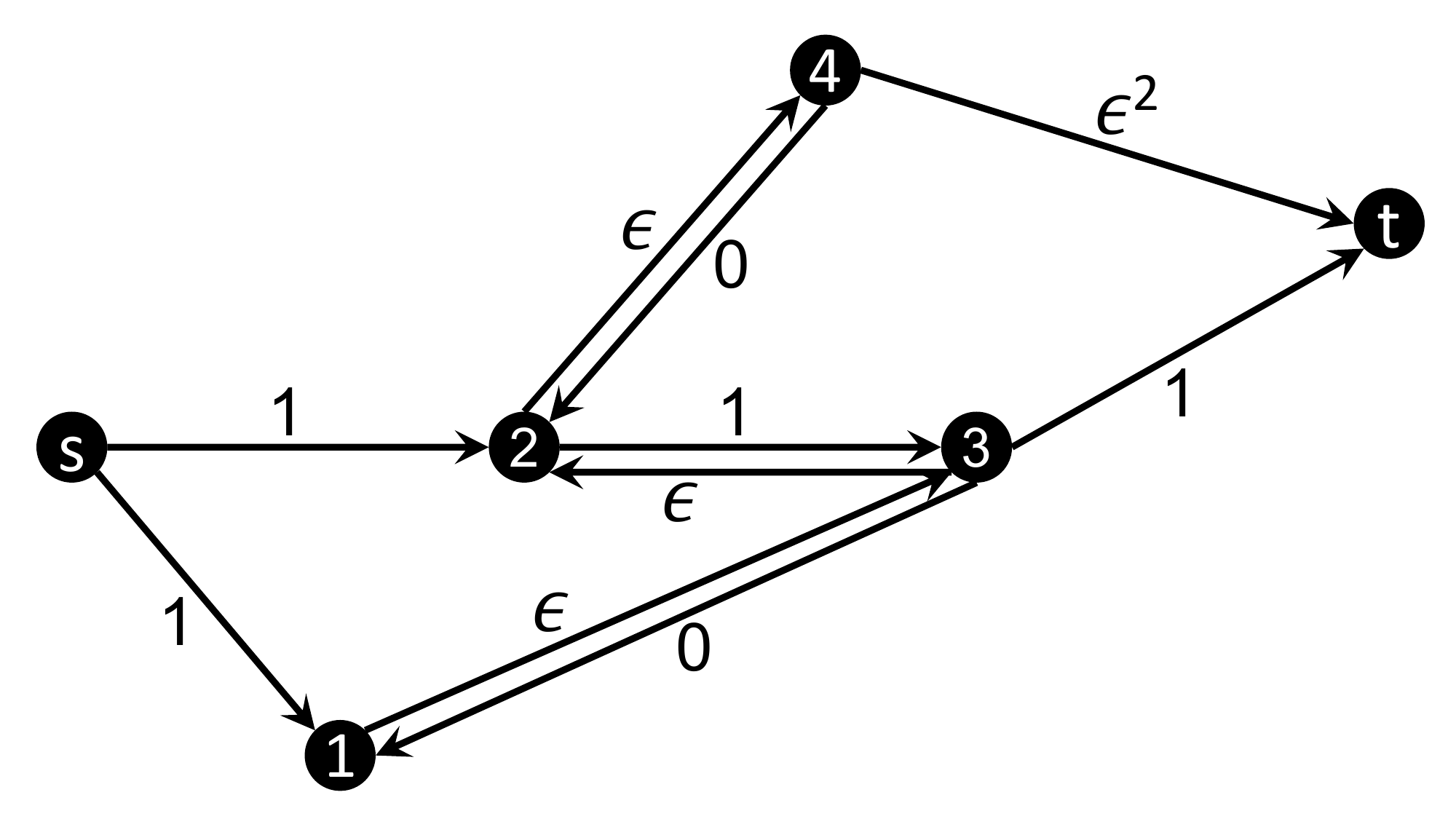}}
\caption{An example illustrating the necessity of constraint (\ref{eq:xijk3}). Nodes $1$ and $4$ have gain factors $\epsilon$, while nodes $2$ and $3$ have unit gain factors.}
\label{fig:constr_xijk3_ex}
\end{figure}
We consider here the nontrivial constraints (\ref{eq:xijk3})-(\ref{eq:flowcons}). While they were all previously helpful in arguing certain flow properties, we have yet to show that each of these constraints is \textit{necessary}. To do so, we consider the MIQCP of Section \ref{subsec:miqcp_form} without each constraint individually, and provide counterexamples that illustrate undesirable flow behavior. 
\begin{table}[t]
\centering
\begin{tabular}{l| l}
Flow on $(i,j)$ & Corresp. $x_{ijk}$\\
\hline
$F_{s1}=1$ & $x_{s13}=1$\\
$F_{13}=\epsilon$ & $x_{132}=1$\\
$F_{32}=\epsilon$ & $x_{324}=1$\\
$F_{24}=\epsilon$ & $x_{24t}=1$\\
$F_{4t}=\epsilon^2$ & N/A\\
$F_{s2}=1$ & $x_{s23}=1$\\
$F_{23}=1$ & $x_{23t}=1$\\
$F_{3t}=1$ & N/A
\end{tabular}
\caption{Variables and their assigned values for the network in Figure \ref{fig:constr_xijk3_ex} (variables assigned 0 not shown).}
\label{tab:constr_xijk3_ex}
\end{table}
First, suppose that (\ref{eq:xijk3}) is removed from the MIQCP. Figure~\ref{fig:constr_xijk3_ex} presents an example where the absence of this constraint allows two opposing flows to coexist -- edge $(2,3)$ carries unit flow, while simultaneously edge $(3,2)$ carries an $\epsilon$ amount of flow; we explain how this happened shortly. In this network, nodes $2$ and $3$ have unit gain factors while nodes $1$ and $4$ have gain factors of $\epsilon$. From the undirected snapshot (top), it is evident that depending on the value of $\epsilon$, it may be optimal to either use the two completely disjoint paths $s-1-3-t$ and $s-2-4-t$, or simply the path $s-2-3-t$. The former yields a snapshot capacity of $2\epsilon$, while the latter yields $1$. Without constraint (\ref{eq:xijk3}), however, it is also possible to obtain a snapshot capacity of $1+\epsilon^2$, if one uses paths $s-2-3-t$ and $s-1-3-2-4-t$. Table~\ref{tab:constr_xijk3_ex} presents all non-zero variables and their assigned values for this network. Note that constraints (\ref{eq:xflow1})-(\ref{eq:flowcons}) are satisfied, but without (\ref{eq:xijk3}) path $s-1-3-2-4-t$, which is not disjoint with path $s-2-3-t$, is active. Constraint (\ref{eq:xijk3}) would have ensured that $x_{s23}$ and $x_{324}$ are not both set to one (similarly for $x_{23t}$ and $x_{132}$).

\begin{figure}
\centering
\subfloat{\includegraphics[width=0.8\linewidth,trim={0 1cm 0 1.4cm},clip]{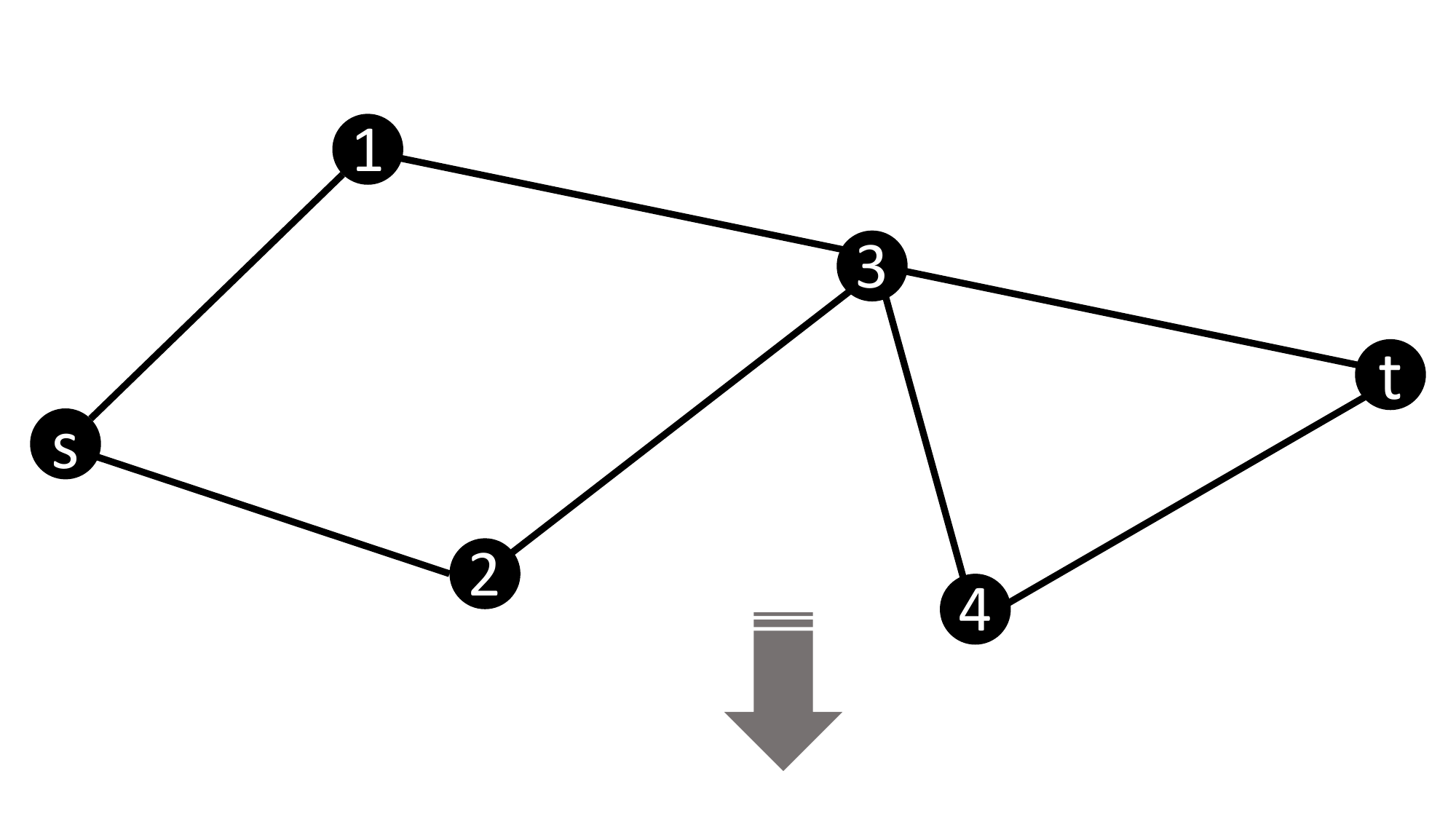}}\\
\subfloat{\includegraphics[width=0.8\linewidth,trim={0 2.5cm 0 1.5cm},clip]{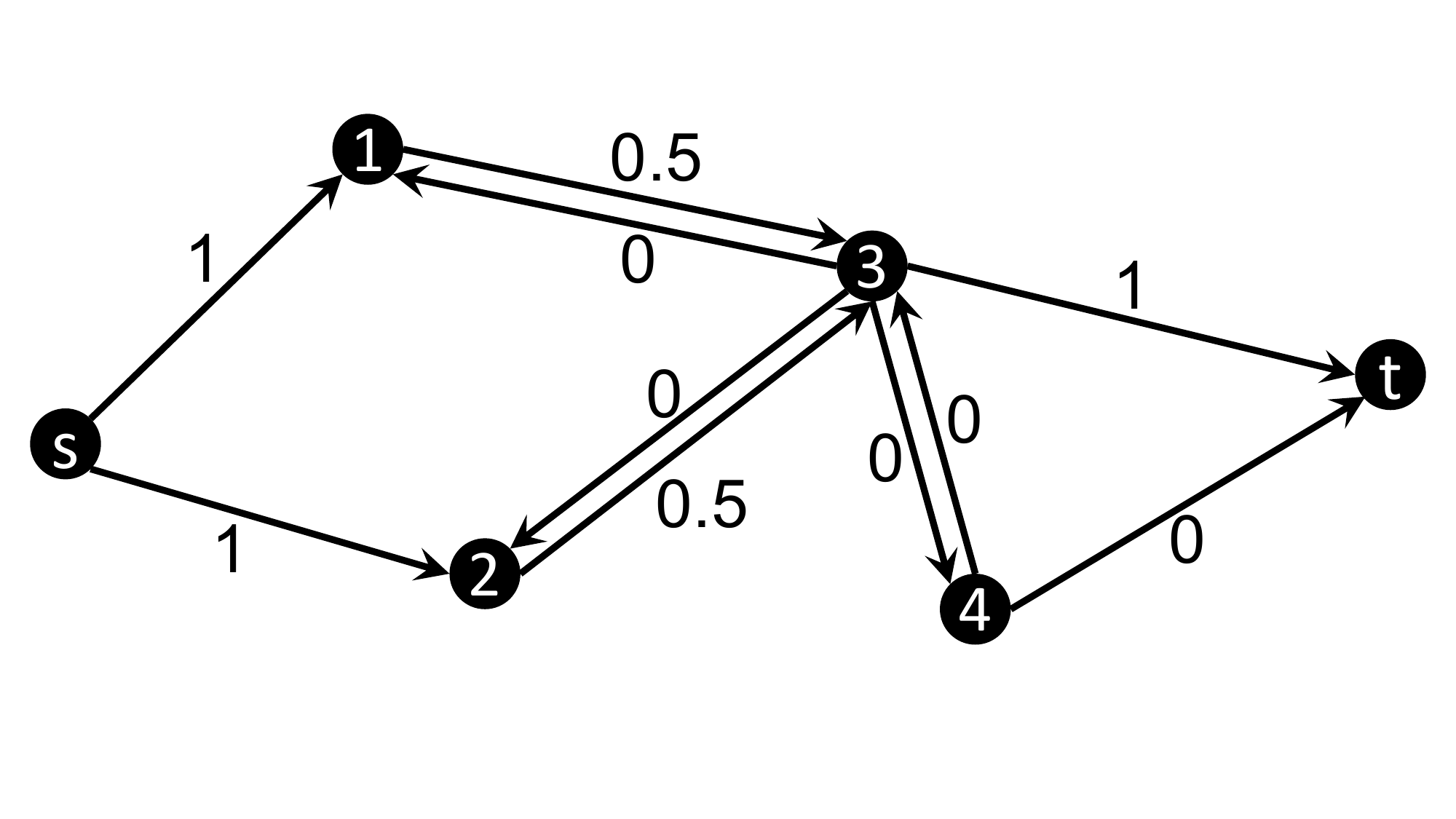}}
\caption{An example illustrating the necessity of constraint (\ref{eq:xflow1}). Nodes $1$ and $2$ have gain factors of $1/2$, node $3$ a unit gain factor, and node $4$ a gain factor of $\epsilon\ll 1$.}
\label{fig:constr_xflow1_ex}
\end{figure}
\begin{table}[t]
\centering
\begin{tabular}{l| l}
Flow on $(i,j)$ & Corresp. $x_{ijk}$\\
\hline
$F_{s1}=1$ & $x_{s13}=1$\\
$F_{13}=0.5$ & $x_{13t}=1$\\
$F_{3t}=1$ & N/A\\
$F_{s2}=1$ & $x_{s23}=1$\\
$F_{23}=0.5$ & $x_{234}=1$
\end{tabular}
\caption{Variables and their assigned values for the network in Figure \ref{fig:constr_xflow1_ex} (variables assigned 0 not shown).}
\label{tab:constr_xflow1_ex}
\end{table}
We next move on to constraint (\ref{eq:xflow1}): to see why it is necessary, consider the network in Figure~\ref{fig:constr_xflow1_ex}. Here, nodes $1$ and $2$ have gain factors of $1/2$, node $3$ a unit gain factor, and node $4$ a gain factor of $\epsilon\ll 1$. The true optimal value of the objective is $0.5+0.5\epsilon<1$, using for example the two edge-disjoint paths $s-1-3-t$ and $s-2-3-4-t$. Without constraint (\ref{eq:xflow1}), however, it is possible to route flow so that the optimal value is $1$ as shown in Figure~\ref{fig:constr_xflow1_ex}, bottom. First, note that with these flow assignments, flow at each internal node is balanced (constr. (\ref{eq:flowcons})). Table~\ref{tab:constr_xflow1_ex} makes it easy to see that (\ref{eq:xflow2}) is also satisfied; the second column of the table can also be used to verify that conditions (\ref{eq:xijk3}) are met. If (\ref{eq:xflow1}) were included, it would have ensured that flows do not merge at node 3.

\begin{figure}
\centering
\subfloat{\includegraphics[width=0.8\linewidth,trim={0 2.3cm 0 1.5cm},clip]{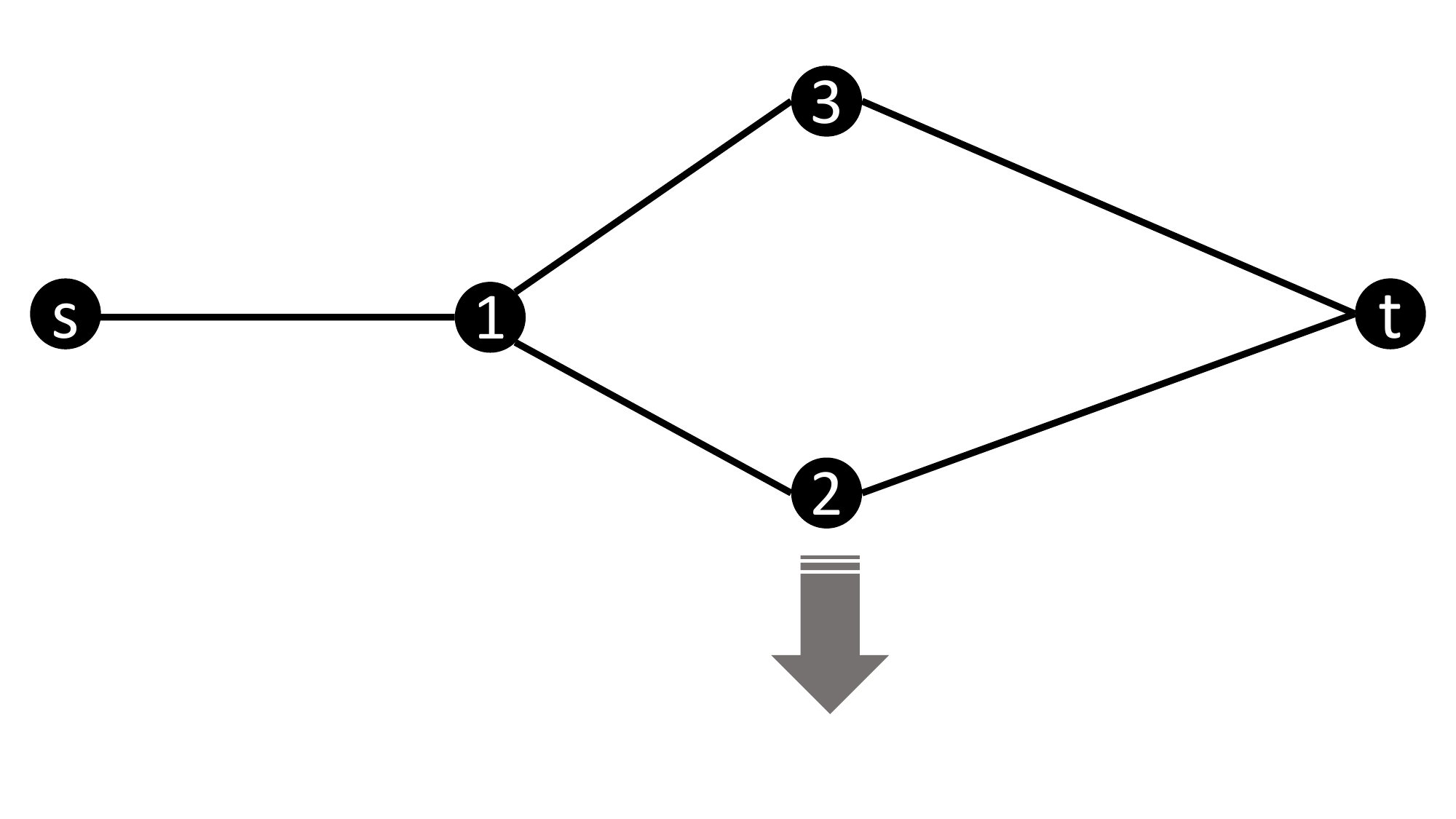}}\\
\subfloat{\includegraphics[width=0.8\linewidth,trim={0 3cm 0 4.1cm},clip]{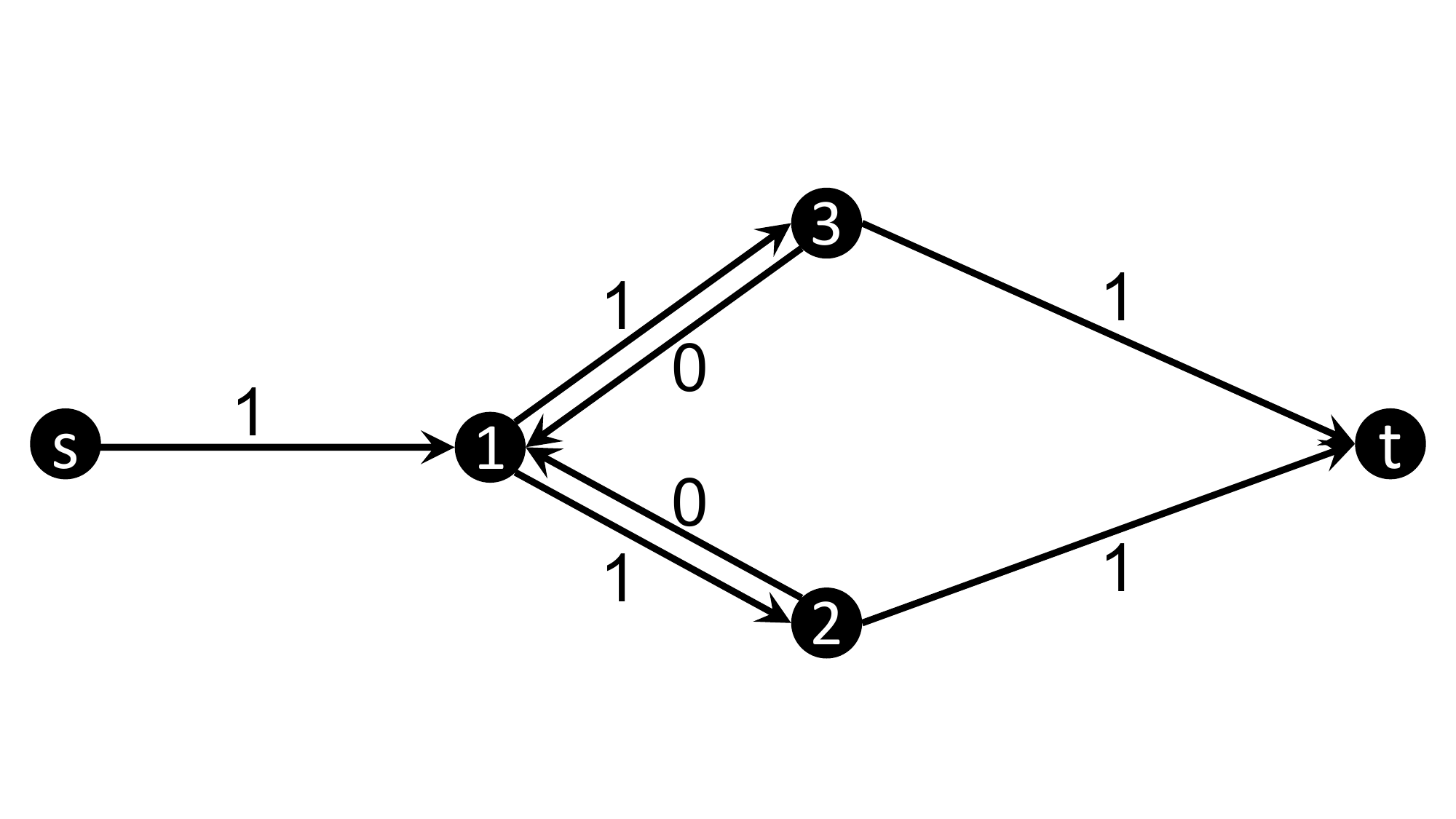}}
\caption{An example illustrating the necessity of constraint (\ref{eq:flowcons}). All internal nodes have unit gain factors.}
\label{fig:constr_flowcons_ex}
\end{figure}
\begin{table}[t]
\centering
\begin{tabular}{l| l}
Flow on $(i,j)$ & Corresp. $x_{ijk}$\\
\hline
$F_{s1}=1$ & $x_{s13}=1$\\
$F_{13}=1$ & $x_{13t}=1$\\
$F_{3t}=1$ & N/A\\
$F_{12}=1$ & $x_{12t}=1$\\
$F_{2t}=1$ & N/A
\end{tabular}
\caption{Variables and their assigned values for the network in Figure \ref{fig:constr_flowcons_ex} (variables assigned 0 not shown).}
\label{tab:constr_flowcons_ex}
\end{table}
Finally, we address constraint (\ref{eq:flowcons}). Figure~\ref{fig:constr_flowcons_ex} presents a network snapshot (top) and corresponding flow formulation (bottom) where the sink node $t$ receives more flow than is released from the source $s$. In this network, all internal nodes have unit gain factors. Table~\ref{tab:constr_flowcons_ex} presents the non-zero flows and their corresponding $x_{ijk}$'s (each set to one, to ensure that constraints (\ref{eq:xflow2}) are satisfied). It can be checked from the right column of the table that constraints (\ref{eq:xijk3}) are also satisfied. Since $x_{s13}$, $x_{13t}$, and $x_{12t}$ are set to one, we have $F_{s1}=F_{13}$, $F_{13}=F_{3t}$, and $F_{12}=F_{2t}$, respectively, to ensure that (\ref{eq:xflow1}) are satisfied. Without constraint (\ref{eq:flowcons}), however, nothing enforces overall flow conservation at node $1$.

We conclude this section with a final remark: that switching the roles of $s$ and $t$ so that $s$ is the sink and $t$ the source, \ie, reversing the flow within the network, yields the same solution to the snapshot flow problem. To see this, recall that flows follow edge-disjoint $\st$ paths, with the additional requirement that if positive flow exists on edge $(i,j)$, then edge $(j,i)$ carries zero flow. This means that to obtain the same solution on a reversed network, one simply reverses each of the $\st$ paths obtained on the original network. This reversal preserves all necessary requirements on the flow, and the amount of flow that reaches node $s$ in the modified network is the same as the amount that would have reached $t$ in the original network (the multiplication order of $q_i$'s is inconsequential). Obtaining a better solution on the reversed network is impossible, as a consequence of the same argument as above -- one could obtain the same solution on the original network using path reversal.

\section{Extension to Multiplexed Links}
\label{sec:multiplexed}
\begin{figure}
\centering
\subfloat[initial snapshot]{\includegraphics[width=0.8\linewidth,trim={0 2.8cm 0 2.2cm},clip]{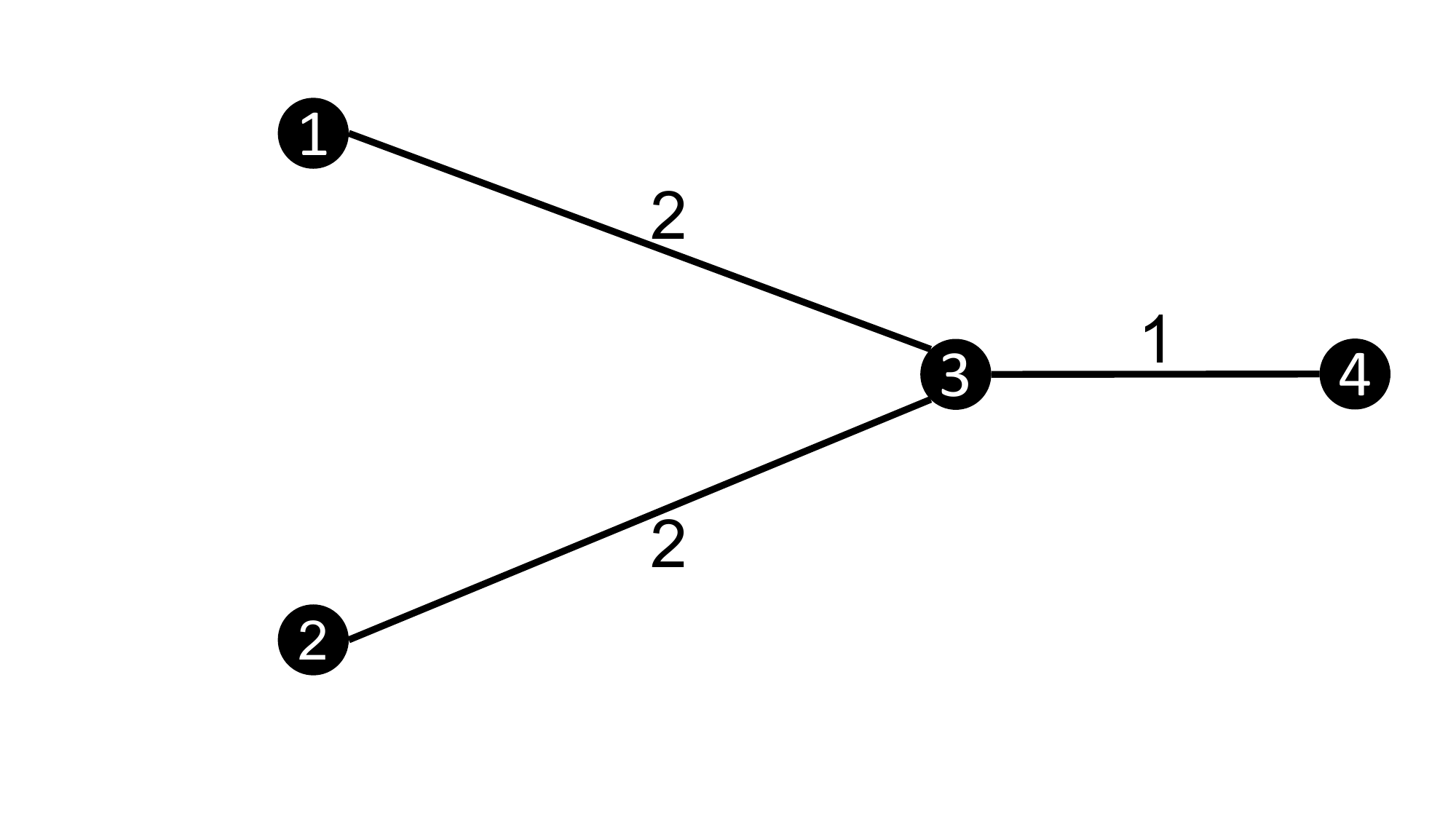}\label{fig:multipl:a}}\\
\subfloat[entangled link transformation]{\includegraphics[width=0.8\linewidth,trim={0 2.8cm 0 2.2cm},clip]{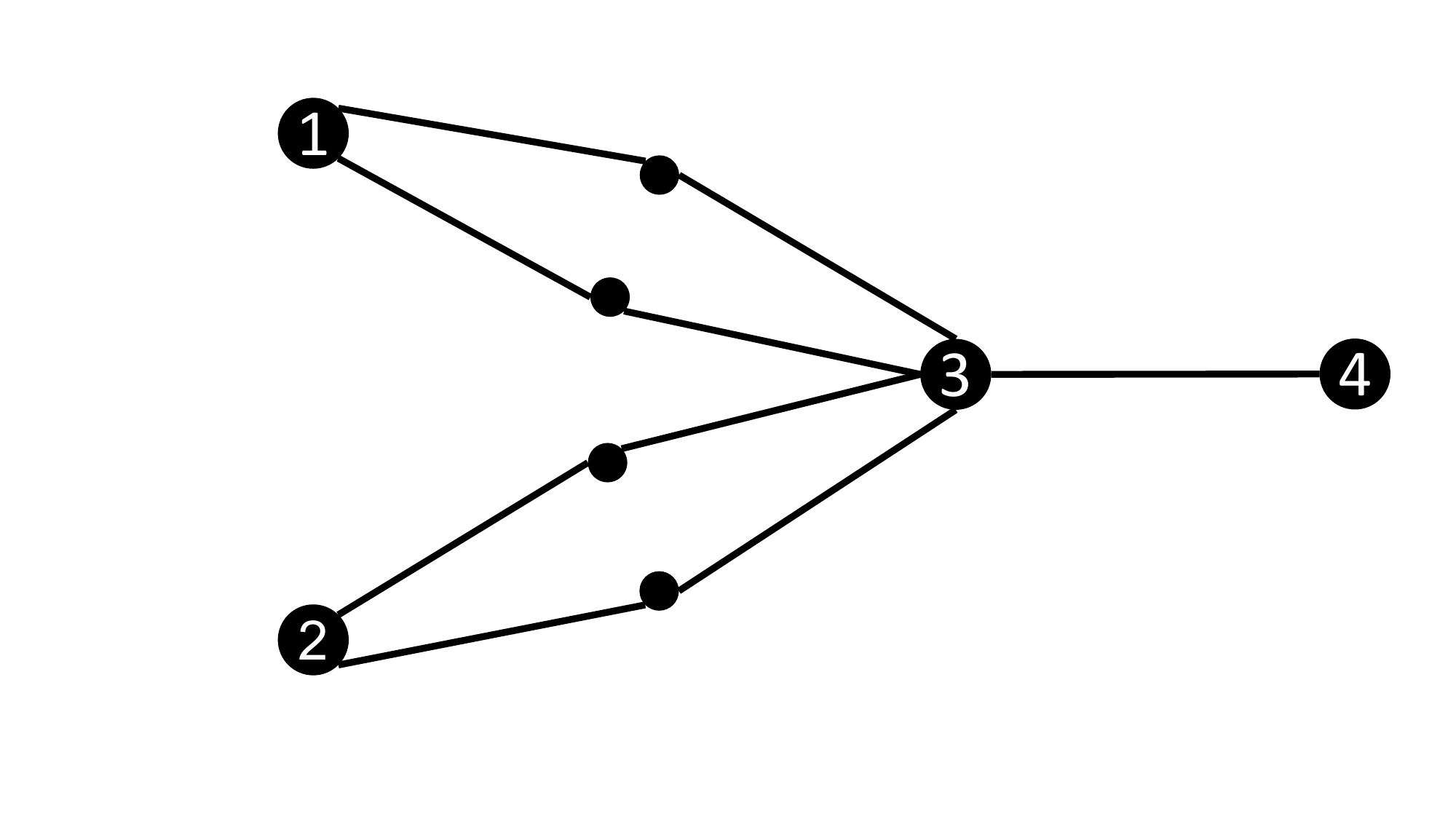}\label{fig:multipl:b}}\\
\subfloat[conversion to a directed graph]{\includegraphics[width=0.8\linewidth,trim={0 2.8cm 0 2.2cm},clip]{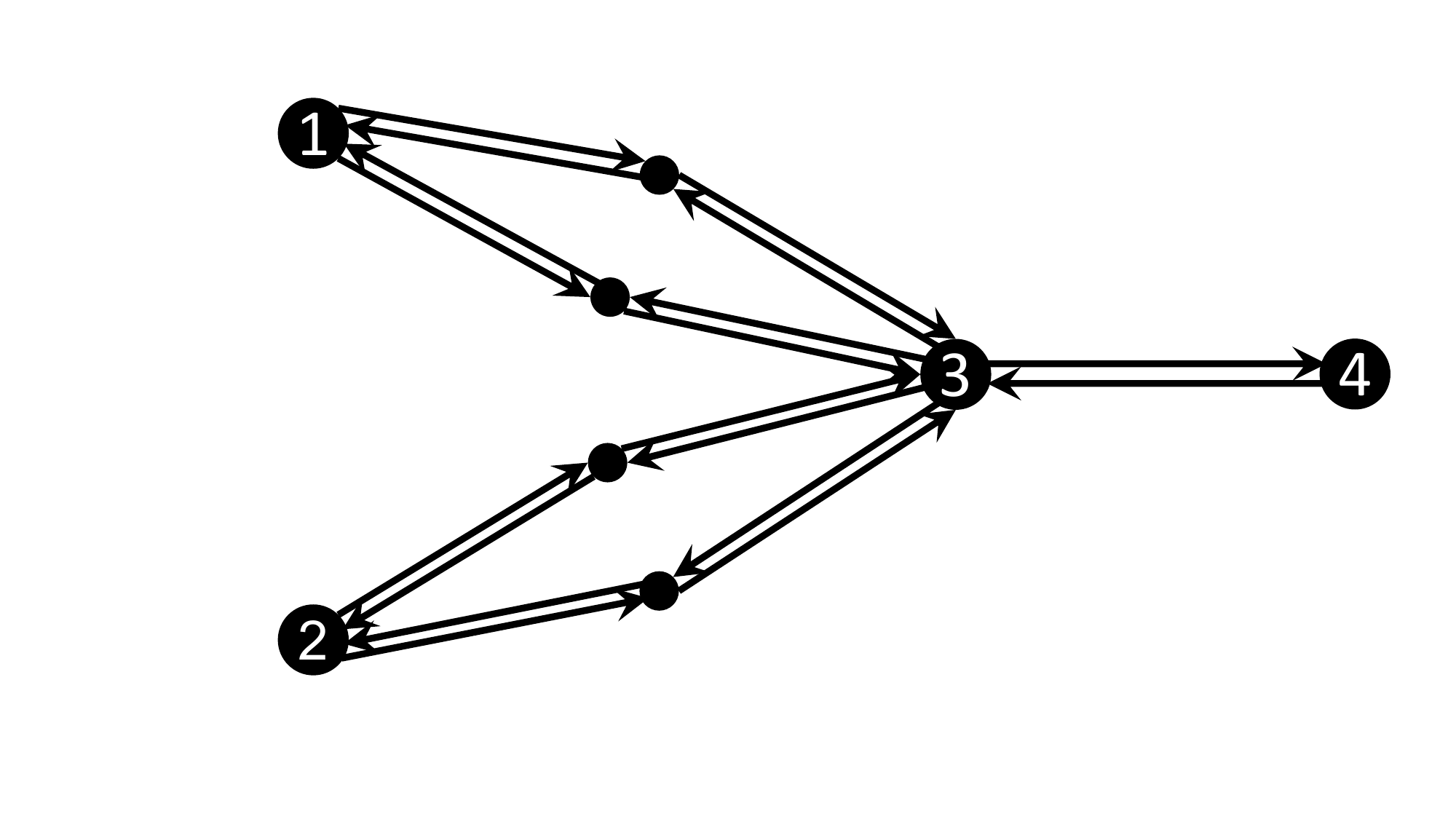}\label{fig:multipl:c}}
\caption{An example multiplexed network snapshot snippet, (a), is transformed by first placing a node of unit gain factor in the middle of each entangled link as shown in (b), and then by converting to a directed graph, (c).}
\label{fig:multipl}
\end{figure}
In this section we extend our approach to compute the capacity of a multiplexed quantum network, \ie, one in which neighboring nodes may establish multiple entangled links in a single time slot. There are two approaches of extending the problem formulation to include multiplexing at the link level. The first is to modify the MIQCP from the previous section to account for the possible presence of more than one Bell pair on each link. Another approach is to instead transform the graph in a way that would conform with the MIQCP in Section \ref{sec:miqcp}. While the first method may be more efficient in yielding a solution (it uses fewer decision variables), in this work, we adapt the latter approach as it has the advantage of not requiring any changes to the MIQCP, thus making it straightforward to check its correctness.

We propose to carry out such a transformation as follows: consider a link that has capacity $c=2,3,\dots$ We first replace the link with $c$ undirected links, each with capacity one. 
We then transform each such entangled link $(i,j)$ into two edges and a new node $n$ with unit gain factor, so that the result is a snapshot with edges $(i,n)$ and $(n,j)$, and no direct edge between $i$ and $j$. An example of this step is shown in Figure~\ref{fig:multipl:b}, for the multiplexed network snippet in \ref{fig:multipl:a}, where all nodes are assumed to be internal ones. This step is only necessary for snapshot edges with capacities of at least two -- notice that the edge $(3,4)$ in the example is unchanged. Note further that this transformation does not change the solution of the problem, in particular because each new node has a unit gain factor and is only connected to two nodes which no longer have a direct connection. Hence, the resulting snapshot is equivalent to the original one within the context of our flow problem, but has the advantageous property that each edge is of unit capacity. The next step is thus to convert the transformed snapshot into a directed graph, as shown in Figure~\ref{fig:multipl:c} (for a snippet of a snapshot), enabling us to apply the MIQCP to obtain a solution.

\begin{figure}[t]
\centering
\includegraphics[width=0.8\linewidth,trim={0 3cm 0 1.9cm},clip]{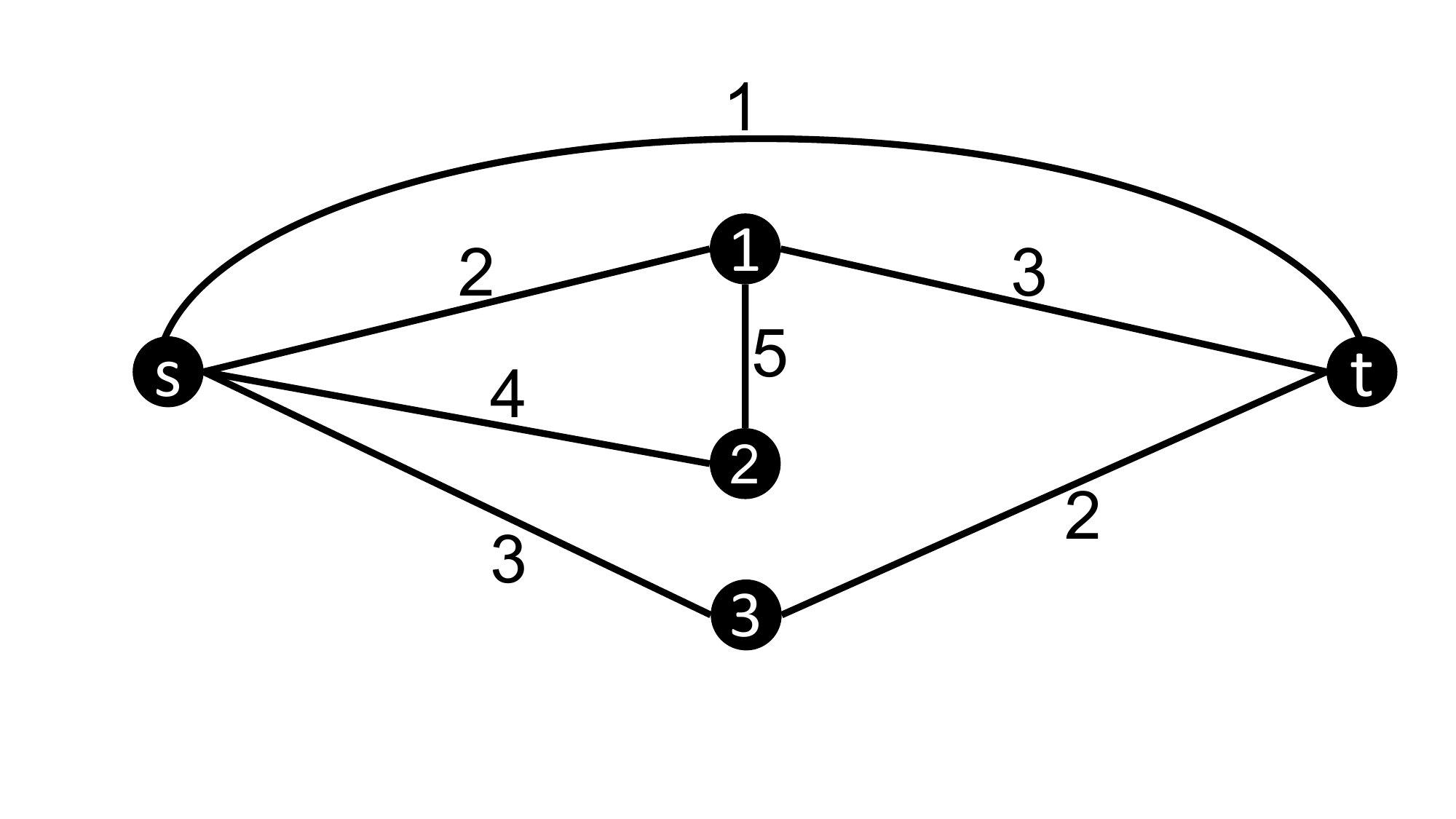}
\caption{A five-node network snapshot with multiplexing. Each edge has a capacity, or the number of available Bell pairs that were successfully generated for this snapshot. The loss factors for the internal nodes $1$, $2$, and $3$ are $0.5$, $0.27$, and $0.64$, respectively.}
\label{fig:simplenetMultiplex}
\end{figure}
Figure \ref{fig:simplenetMultiplex} shows an example of a simple network snapshot with multiplexing.
 The solution for this example is easy to obtain by hand: first, note that there is a direct edge between $s$ and $t$, of capacity 1. Then, there are two paths $s-3-t$, contributing $0.64\times 2 = 1.28$ to the total capacity. The next obvious paths are the two $s-1-t$ paths (each of which necessarily yields a higher capacity than the longer path $s-2-1-t$), with an overall capacity of $0.5\times 2 = 1$. Finally, we may use one of the $s-2-1-t$ paths, with a capacity of $0.27\times 0.5=0.135$. This yields a snapshot capacity of $1+1.28+1+0.135=3.415$, which is also the output of the MIQCP.

\section{Numerical Examples}
\label{sec:numexamples}
We evaluate our capacity computation approach on several unit edge capacity and multiplexed quantum network scenarios. For non-multiplexed networks, we validate the results of the MIQCP using a brute-force capacity computation method applied to every possible network snapshot. For multiplexed networks, whose brute-force validation is infeasible due to both the total number of snapshots as well as their complexities (\ie, a large number of disjoint path set combinations to be evaluated), we apply the brute-force method to a small subset of the snapshots. These results are further validated by the capacity predictions of the local-knowledge based algorithm introduced by Pant \textit{et. al}, \cite{pant2019routing}. In this work, the authors study multi-path entanglement routing in homogeneous networks, wherein all links and nodes have identical entanglement generation and swapping success probabilities, respectively. The proposed routing algorithm uses a Euclidean distance based metric, as well as local link state knowledge to inform each individual repeater what entanglement swaps to perform. To compare against our formulation, the distance metric was updated to use hop distances which was seen to perform better on our arbitrary-topology, non-homogeneous networks. Additional modifications were made to account for multiplexed edges, before Monte Carlo simulations were performed to find the average capacity; a full description of the algorithm is given in \cite{emilyTMPaper}. As this algorithm provides no guarantee of yielding an optimal solution on a network snapshot, we expect a worse performance than that achieved with the MIQCP.

For most of our examples, entanglement generation success probabilities are computed using the formula $p = c \eta$, where $\eta$ is the transmissivity of optical fiber and $c$ is a factor that accounts for various losses other than the transmission loss in fiber. For a specific link $l$, the former is given by
\begin{align}
\eta_l = 10^{-0.1\beta L_l},
\end{align}
where $\beta$ is the fiber attenuation coefficient and $L_l$ is the length of the link. We take $c=0.9$ and $\beta=0.2$ dB/km. Note that this value for $c$ is  optimistic for near-term quantum networks, but using it does allow us to explore more interesting regimes (as well as to avoid numerical instability issues).

Two of the networks which we study in this section -- the Abilene and the NSFNet, which formerly stretched across the contiguous U.S., -- are scaled down when we evaluate our capacity computation method on them: the NSFNet to a metropolitan area, and the Abilene to a campus-sized area. The reason for this is threefold: 
\begin{itemize}[noitemsep,nolistsep]
\item[(1)] a first- or second-generation quantum network with this topology stretched across the U.S. would require additional repeater chains to implement long connections, as direct connections over these distances would result in prohibitively high transmission losses;
\item[(2)] ignoring the practical aspect of the item above, high transmission losses translate to very small entanglement generation success probabilities, potentially resulting in numerical instability during capacity computation; and 
\item[(3)] studying metropolitan and campus area sized networks is more relevant to the near-term.
\end{itemize}
We solve each snapshot's corresponding MIQCP using the Gurobi Optimizer (version 9.1.2, with default solver options) \cite{gurobi}. Our code is available at \cite{github_repo}.
Throughout this section, network capacities are in ebits per second, unless otherwise noted.

\subsection{Simple Five-Node Network}
The first scenario we consider is the simple five-node network in Figure~\ref{fig:simplenetMultiplex}. The link-level entanglement generation success probabilities for this network have been randomly chosen, and are $0.5679$, $0.5179$, $0.4723$, $0.2479$, $0.7839$, $0.5423$, $0.4308$ for edges $(0,1)$, $(0,2)$, $(0,3)$, $(0,4)$, $(1,2)$, $(1,4)$, and $(3,4)$, respectively. The entanglement swapping success probabilities, also randomly chosen, are $0.5$, $0.27$, $0.64$ for nodes $1$, $2$ and $3$, respectively. The network capacity of this network is $1.2121$ ebits per time unit, which can be compared to the maximum possible $3.415$ when all links are present. The network capacity value obtained via MIQCP matches that of the brute-force algorithm. The local knowledge algorithm produces an average capacity of $1.18019$, similar to our calculated results. 

\subsection{Networks Based on the Abilene Network}
\begin{figure}
\centering
\includegraphics[width=0.9\linewidth]{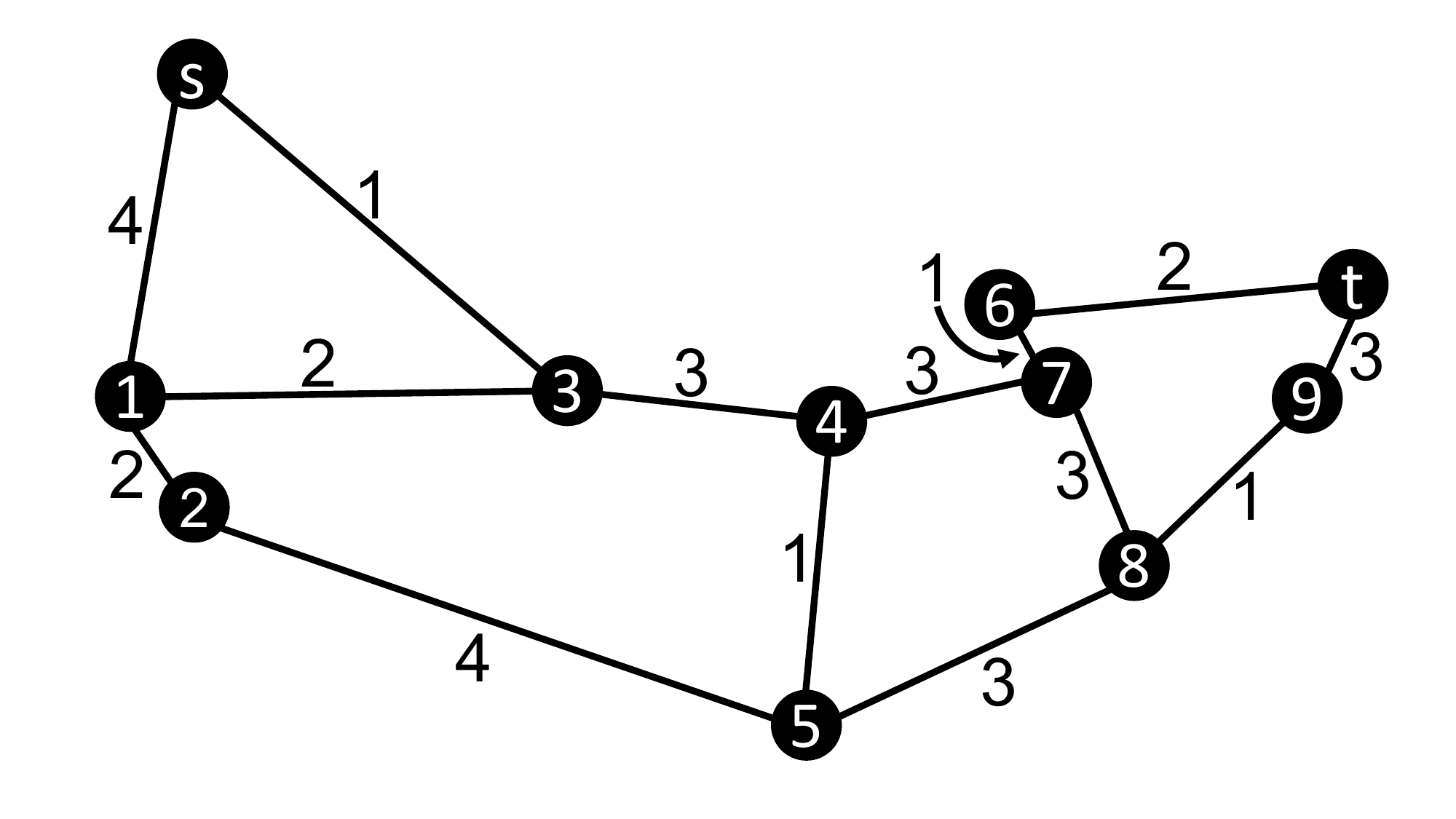}
\caption{The Abilene network topology, scaled down to a metropolitan area (see Table \ref{tab:AbileneDescr} for link lengths and entanglement generation success probabilities). Edges are labeled with link capacities for the multiplexed version of the network.}
\label{fig:netAbilene}
\end{figure}
\begin{table}
\centering
\begin{tabular}{l|c|c|c}
Edge & Original dist. (km) & Scaled dist. (km) & Success prob.\\
\hline
$(s,1)$ & 1138 & 1.138 & 0.8540\\
$(s,3)$ & 1641 & 1.641 & 0.8345\\
$(1,2)$ & 503 & 0.503 & 0.8794\\
$(1,3)$ & 1504 & 1.504 & 0.8398\\
$(2,5)$ & 2206 & 2.206 & 0.8131\\
$(3,4)$ & 896 & 0.896 & 0.8636\\
$(4,5)$ &1041 & 1.041 & 0.8579\\
$(4,7)$ & 727 & 0.727 & 0.8704\\
$(5,8)$ & 1128 & 1.128 & 0.8544\\
$(6,7)$ & 265 & 0.265 & 0.8891\\
$(6,t)$ & 1144 & 1.144 & 0.8538\\
$(7,8)$ & 688 & 0.688 & 0.8719\\
$(8,9)$ & 872 & 0.872 & 0.8646\\
$(9,t)$ &328 & 0.328 & 0.8865
\end{tabular}
\caption{Description of the Abilene Network, with success probabilities corresponding to the scaled-down distances.}
\label{tab:AbileneDescr}
\end{table}
Launched in the late nineties, the Abilene Network connected several institutions across the United States and was used extensively for research and development until its retirement in 2007. The topology of the network is shown in Figure~\ref{fig:netAbilene}; Table~\ref{tab:AbileneDescr} provides a more detailed description of the network.
Each node's BSM success probability was sampled uniformly at random from $(0.9,1)$, resulting in $\{q_1,\dots,q_9\}=\{0.99,~0.96,~0.92,~0.93,~0.91,~0.99,~0.97,~0.92,~0.98\}$.

The capacity of the non-multiplexed scaled-down Abilene Network, as defined by Eq. (\ref{eq:capacity}), is $0.8301$ (this number matches that of the brute-force search algorithm output). This value can be contrasted with the capacity of the snapshot with all entangled links present -- $1.607$. The local knowledge algorithm gives an average capacity of $ 0.85271$. For the multiplexed Abilene Network, we leave the link generation probabilities unchanged, \ie, in each time slot, there may be up to $c_l$ EPR pairs on link $l$, each generated with probability $p_l$. The swapping success probabilities are also left unchanged. The capacity of the multiplexed Abilene Network with edge capacities $c_l$ as shown in Figure~\ref{fig:netAbilene} is $1.387$. This can be compared to a capacity of $0.95198$ provided by the local-knowledge algorithm. For a multiplexed Abilene Network with $c_l=2$, $\forall l$, capacity increases to $1.983$. This increase in capacity compared to that of the network in Figure~\ref{fig:netAbilene} is due to the possibility of generating extra links on bottleneck edges $(6,7)$ and $(8,9)$. This can be compared to a capacity of $1.62778$ given by the local-knowledge algorithm.
\subsection{Network Based on the NSFNet}
NSFNET was a backbone network that linked several U.S. national supercomputing centers and later played a part in developing a portion of the Internet backbone.
A version of the NSFNET topology is presented in Figure \ref{fig:nsfnet}. Table \ref{tab:NSFNetDescr} describes the original approximate link distances, adopted from \cite{ghose2005multihop}, as well as scaled distances. Success probabilities for link-level entanglement generation correspond to the scaled distances. 
The BSM success probabilities, sampled uniformly at random from $(0.5,1)$ and rounded down, are set to $\{q_1,\dots,q_{12}\} = \{0.9, 0.9, 0.8, 0.5, 0.5, 0.5, 0.5, 0.5, 0.7, 0.5, 0.7, 0.5\}$.
The network capacity is $0.1013397$ (brute-force algorithm validates this). The local knowledge algorithm gives an average capacity of $0.08136$.
\begin{figure}[t!]
\centering
\includegraphics[width=0.9\linewidth]{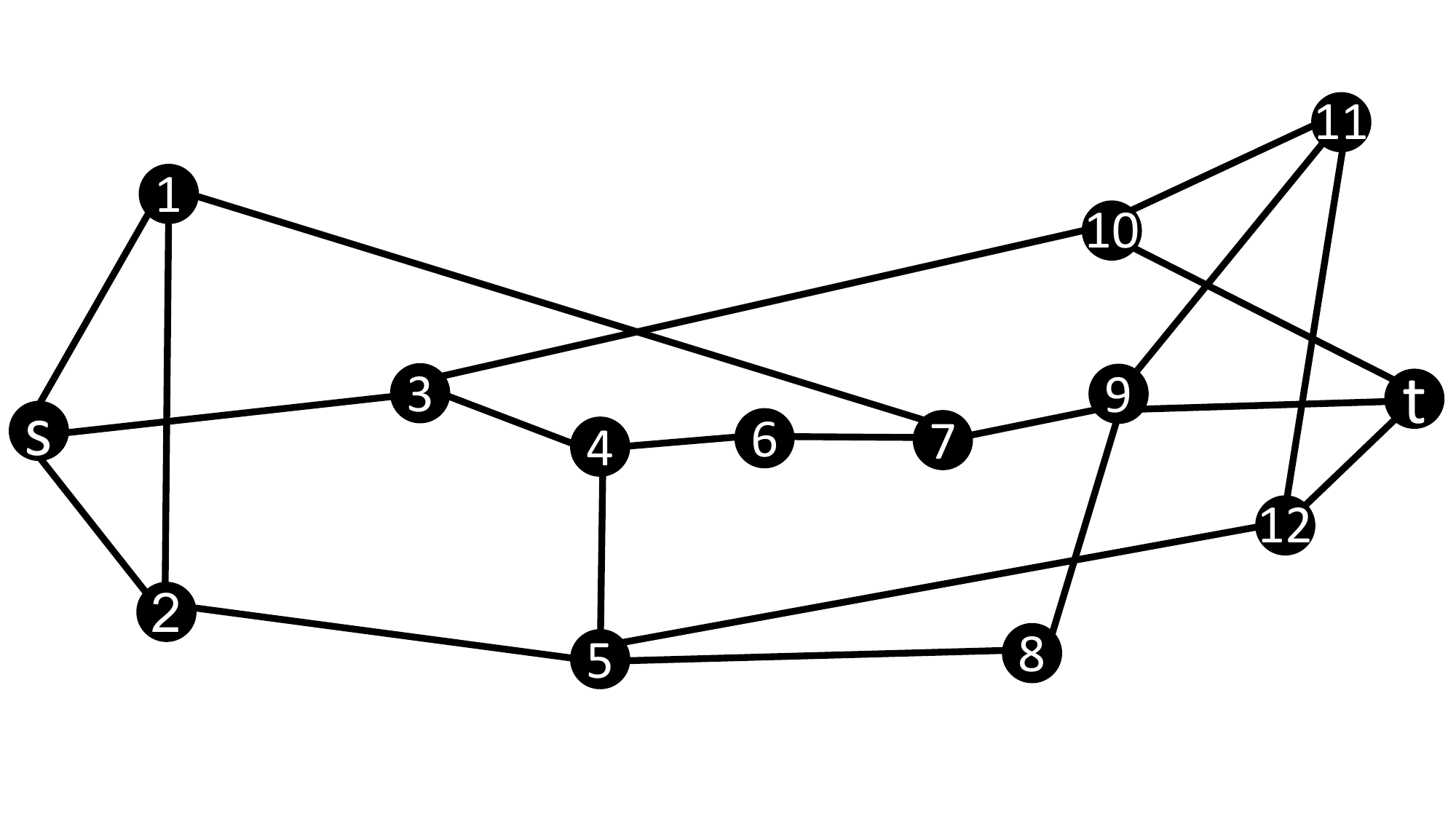}
\caption{The NSFNet topology.}
\label{fig:nsfnet}
\end{figure}
\begin{table}[t]
\centering
\begin{tabular}{l|c|c|c}
Edge & Original dist. (km) & Scaled dist. (km) & Success prob.\\
\hline
$(s,1)$     & 1100  & 11 & 0.5423\\
$(s,2)$     &  600 & 6 & 0.6827\\
$(s,3)$     & 1000 & 10 & 0.5679\\
$(1,2)$ & 1600 & 16 & 0.4308\\
$(1,7)$ & 2800 & 28 & 0.2479\\
$(2,5)$ & 2000 & 20 & 0.3583\\
$(3,4)$ & 600 & 6 & 0.6827\\
$(3,{10})$ & 2400 & 24 & 0.2980\\
$(4,5)$ & 1100 & 11 & 0.5423\\
$(4,6)$ & 800 & 8 & 0.6226\\
$(5,8)$ & 1200 & 12 & 0.5179\\
$(5,{12})$ & 2000 & 20 & 0.3583\\
$(6,7)$ & 700 & 7 & 0.6520\\
$(7,9)$ & 700 & 7 & 0.6520\\
$(8,9)$ & 900 & 9 & 0.5946\\
$(9,{11})$ & 500 & 5 & 0.7149\\
$(9,t)$ & 500 & 5 & 0.7149\\
$({10},{11})$ & 800 & 8 & 0.6226\\
$({10},t)$ & 1000 & 10 & 0.5679\\
$({11},{12})$ & 500 & 5 & 0.7149\\
$({12},t)$ & 300 & 3 & 0.7839
\end{tabular}
\caption{Description of the NSFNet network, with success probabilities corresponding to rescaled metropolitan-area distances.}
\label{tab:NSFNetDescr}
\end{table}
\subsection{Network Based on SURFNet}
\begin{figure*}[t]
\centering
\includegraphics[width=0.8\linewidth]{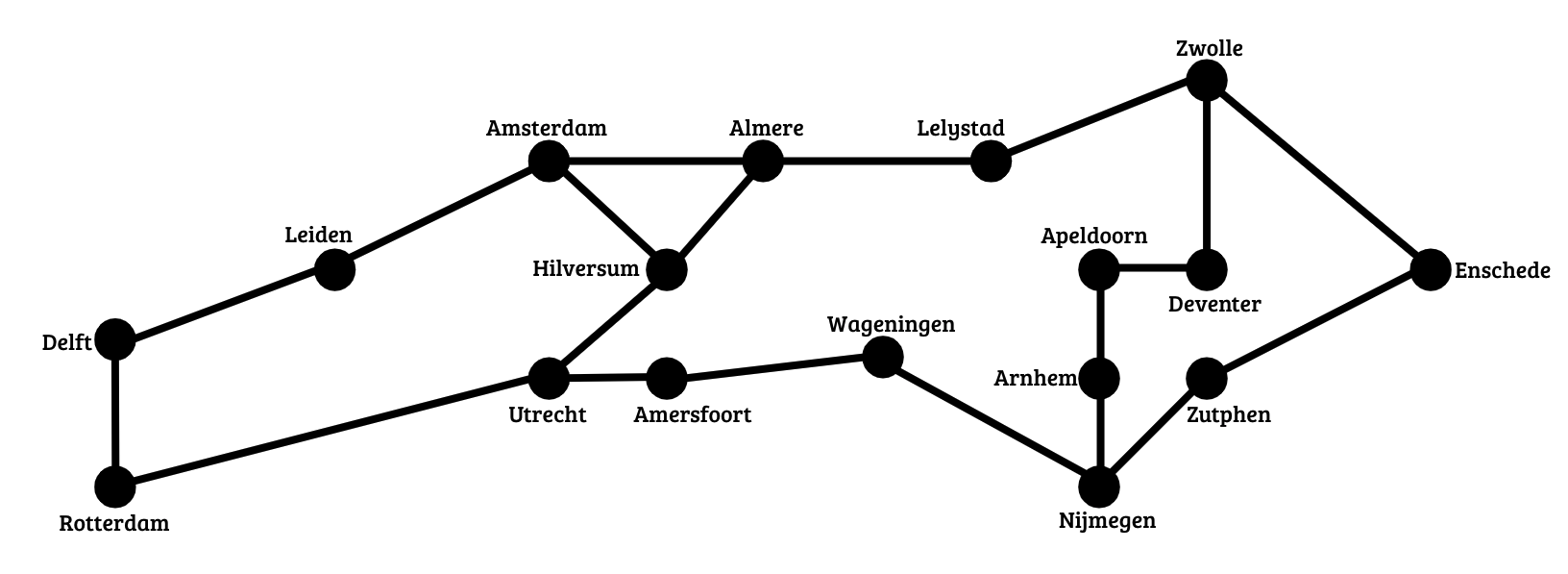}
\caption{A pruned version of the SURFnet topology.}
\label{fig:surfnet}
\end{figure*}
\begin{table}[t]
\centering
\begin{tabular}{c|c|c}
Edge & Distance (km) &  Success prob.\\
\hline
(Delft, Rotterdam) $\equiv (s,1)$    & 16.8 & 0.4152\\
(Delft, Leiden) $\equiv (s,2)$& 30.6 & 0.2199\\
(Leiden, Amsterdam) $\equiv (2,3)$& 60.4 & 0.0557\\
(Rotterdam, Utrecht) $\equiv (1,5)$& 70 & 0.0358\\
(Amsterdam, Hilversum) $\equiv (3,4)$ & 30.2 & 0.2240\\
(Amsterdam, Almere) $\equiv (3,7)$ & 38.9 & 0.1501\\
(Utrecht, Hilversum) $\equiv (4,5)$ & 36.7 & 0.1661\\
(Hilversum,  Almere) $\equiv (4,7)$ & 35.4 & 0.1763\\
(Utrecht, Amersfoort) $\equiv (5,6)$ & 33.8 & 0.1898\\
(Almere, Lelystad) $\equiv (7,9)$& 44.2 & 0.1176\\
(Amersfoort, Wageningen) $\equiv (6,8)$& 62.5 & 0.0506\\
(Wageningen, Nijmegen) $\equiv (8,12)$& 66.3 & 0.0425\\
(Nijmegen, Arnhem) $\equiv (11,12)$ & 25.7 & 0.2756\\
(Nijmegen, Zutphen) $\equiv (12,15)$ & 58.1 & 0.0620\\
(Arnhem, Apeldoorn) $\equiv (10,11)$& 45.3 & 0.1117\\
(Apeldoorn, Deventer) $\equiv (10,14)$& 24.4 & 0.2926\\
(Lelystad, Zwolle) $\equiv (9,13)$& 47.7 & 0.1001\\
(Deventer, Zwolle) $\equiv (13,14)$& 44.7 & 0.1149\\
(Zwolle, Enschede) $\equiv (13,t)$ & 78.7 & 0.0240\\
(Zutphen, Enschede) $\equiv (15,t)$& 60 & 0.0568
\end{tabular}
\caption{Link lengths and corresponding link-level entanglement generation success probabilities for the SURF network.}
\label{tab:SURFnetDescr}
\end{table}
SURFnet is a backbone network used for research and education purposes in the Netherlands.
Figure \ref{fig:surfnet} depicts a pruned version of the SURFnet topology. The full topology is provided in \cite{rabbie2022designing} and a detailed dataset is located at \cite{repalloc}. Note that, contrary to what is shown in  Figure 2 of \cite{rabbie2022designing}, according to SURFnet topological data there is no link between nodes \emph{Delft 1} and \emph{R'dam 1}: to make the (\emph{Delft}, \emph{Rotterdam}) link, we assume there is no \emph{Delft 2} node, and sum the fiber lengths between \emph{Delft 1}, \emph{Delft 2}, and \emph{R'dam 1}. In a similar manner, we also eliminate nodes \emph{Utrecht 2}, \emph{Wageningen 1}, \emph{Nijmegen 2}, \emph{Zwolle 2}, and \emph{Enschede 1}. The pruned topology is described in more detail in Table~\ref{tab:SURFnetDescr}. Swapping success probabilities  are set to $\{q_1,\dots,q_{15}\}=
\{0.87, 0.74, 0.79, 0.62, 0.73, 0.98, 0.77, 0.76, 0.62, 0.74, 0.81,\\ 0.84, 0.7, 0.68, 0.99\}$.
Our goal is to determine the $\st$ flow between Delft and Enschede. We determine this value to be $1.0762\times10^{-7}$ (validated by the brute-force algorithm). The local knowledge based algorithm gave a network capacity of 0. This could be in part because it was calculated over a finite number of samples, and is believed to agree with our results. We remark that the reason for the low capacity compared to the other examples is due to both the comparably large number of hops between $s$ and $t$, as well as the (in general) much lower link-level entanglement generation rates. Based on these results, we conclude that this network would benefit from multiplexing, or if unavailable, an alternative method of boosting link-level rates, \eg, via a single-photon entanglement generation scheme \cite{cabrillo1999creation,humphreys2018deterministic}.

\section{Conclusion}
\label{sec:concl}
In this work we studied the problem of multipath entanglement routing between two nodes in a quantum network. The links in the network may have multiplexing capabilities, but repeater nodes may not be able to perform deterministic entanglement swapping. To solve the problem, we proposed an MIQCP, the solution of which yields the optimal throughput for a given time interval. By averaging over all possible network states, we were able to obtain the average capacity of various networks included in our case studies.

A possible future direction of our work is to explore approximation methods to the capacity computation. A simple way to achieve this would be to compute the capacities of only the most likely network snapshots (those occurring with relatively high probabilities). Other approximations, possibly with much lower time complexities, are also of interest. An extension of our work would be to consider a time horizon beyond one time slot, and to allow the repeater nodes to perform $n-$GHZ measurements, for $n>2$ instead of, or in addition to BSMs. 

\appendices
\if{false}
\section{Comparison of Snapshot Capacity Formulation to a Rate Maximization Problem}
\begin{figure}
    \centering
    \includegraphics[width=0.8\linewidth]{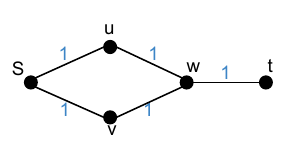}
    \caption{Example quantum network snapshot. All links have unit capacity, and internal nodes $u$, $v$, $w$ succeed at swapping with probability 0.5.}
    \label{fig:comp_example}
\end{figure}
\begin{figure}
    \centering
    \includegraphics[width=0.98\linewidth]{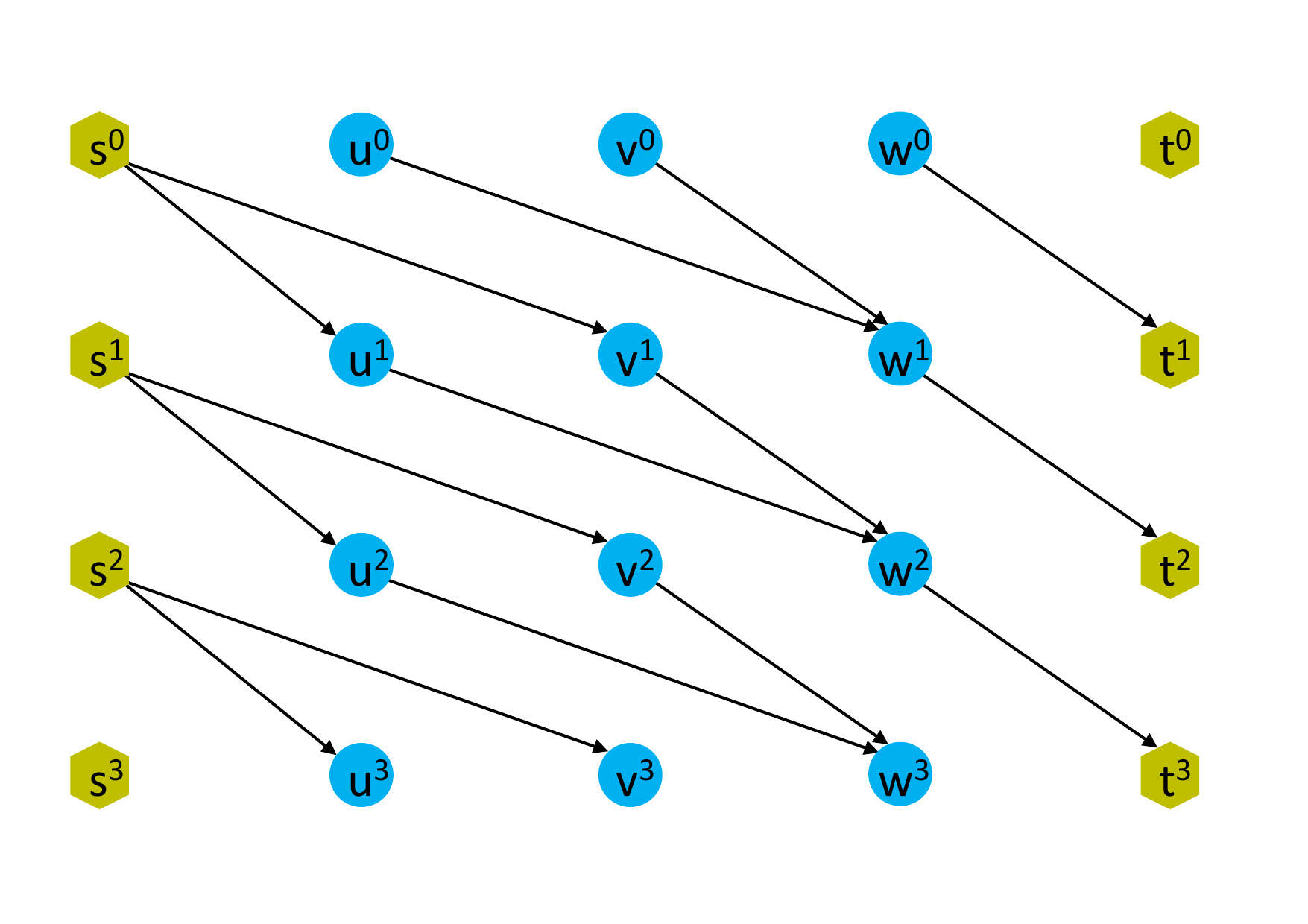}
    \caption{Extended network graph corresponding to the original snapshot depicted in Figure \ref{fig:comp_example}. This graph is used as an input to an edge-based LP formulation that maximizes the entanglement generation rate between nodes $s$ and $t$.}
    \label{fig:comp_example_edgebased}
\end{figure}
In this appendix, we compare the formulation introduced in this text to that of \cite{chakraborty2020entanglement}. There, the authors constructed a Linear Program (LP) to solve the entanglement distribution problem in a network with multiple source-destination pairs. It was assumed that entanglement swapping succeeds probabilistically (albeit, all repeaters succeed with equal probability), and that users have minimum end-to-end fidelity requirements which translates into a maximum number of swapping operations. To compare this work with our setup, we will therefore consider the network depicted in Figure~\ref{fig:comp_example} with a single source-destination pair $(s,t)$ and $q_i=0.5$, $\forall~i\in V\setminus \{s,t\}$. For simplicity, we set all link capacities to one and assume that the maximum allowable path length, $l_{\max}$, is three.

The rate maximization problem is solved in \cite{chakraborty2020entanglement} using an edge-based formulation which involves the construction of an extended graph. This graph consists of $l_{\max}+1$ copies of each node $n$, labelled $n^0,\dots,n^{l_{\max}}$, and unit-capacity edges $(n^i,m^{i+1})$, $0\leq i < l_{\max}$ whenever there exists an edge between nodes $n$ and $m$ in the original graph. Figure \ref{fig:comp_example_edgebased} presents the extended graph of the one shown in Figure \ref{fig:comp_example}.
This graph is then used as an input to the LP that maximizes the flow from source $s^0$ to destinations $t^i$, $1\leq i \leq l_{\max}$. For our example, the optimal value that results from this formulation is 0.5: using Figure \ref{fig:comp_example_edgebased}, we identify two paths from $s^0$ to $t^3$:
\begin{align}
    s^0 &\to u^1 \to w^2 \to t^3,\\
    s^0 &\to v^1 \to w^2 \to t^3.
\end{align}
Edges $(s^0,u^1)$ and $(s^0,v^1)$ carry a unit of flow each, 

\gv{Work in progress.}
\fi
\section*{Acknowledgment}
DT and SG acknowledge support from the NSF grant CNS-1955744, NSF-ERC Center for Quantum Networks grant EEC-1941583, and MURI ARO Grant W911NF2110325. This work is supported in part by QuTech NWO funding 2020–2024—Part I ‘Fundamental Research’, Project Number 601.QT.001-1, financed by the Dutch Research Council (NWO). GV acknowledges support from the NWO ZK QSC Ada Lovelace Fellowship and NWO QSC grant  BGR2 17.269. 
GV thanks Subhransu Maji, Guus Avis, and Ashlesha Patil for useful discussions. This work was performed in part using high performance computing equipment obtained under a grant from the Collaborative R\&D Fund managed by the Massachusetts Technology Collaborative.
\bibliographystyle{IEEEtran}
\bibliography{refs}
\EOD
\end{document}